\newtheorem{theorem}{Theorem}
\newtheorem{lemma}{Lemma}
\newtheorem{proposition}{Proposition}
\newtheorem{example}{Example}
\newdefinition{definition}{Definition}
\newproof{proof}{Proof}
\newcommand{\R}{\mathbb{R}}
\newcommand{\K}{\mathcal{K}}
\newcommand{\X}{\mathcal{X}}
\newcommand{\Co}{C^n}
\newcommand{\Cv}{C^n(v)}
\renewcommand{\P}{P^n}
\newcommand{\Pv}{P^n(v)}
\newcommand{\cif}{\text{if }}
\newcommand{\cand}{\text{ and }}
\begin{document}

\begin{frontmatter}

\title{Positivity and convexity in incomplete cooperative games\tnoteref{funding}
}

\tnotetext[funding]{The first and the second author were supported by SVV--2020--260578 and Charles University Grant Agency (GAUK 341721).
The first author acknowledges the support of Student Faculty Grant by Faculty of Mathematics and Physics of Charles University.}

\author[1]{Martin Černý}
\ead{cerny@kam.mff.cuni.cz}

\author[2]{Jan Bok\corref{cor1}}
\cortext[cor1]{Corresponding author}
\ead{bok@iuuk.mff.cuni.cz}

\author[2,3]{David Hartman}
\ead{hartman@iuuk.mff.muni.cz}

\author[1]{Milan Hladík}
\ead{hladik@kam.mff.cuni.cz}

\address[1]{Department of Applied Mathematics, Faculty of Mathematics and Physics, Charles University, Prague, Czech Republic}
\address[2]{Computer Science Institute, Faculty of Mathematics and Physics, Charles University, Prague, Czech Republic}
\address[3]{Institute of Computer Science of the Czech Academy of Sciences, Prague, Czech Republic}

\begin{keyword}
cooperative games \sep incomplete games \sep upper game \sep lower game \sep positive games \sep convex games \sep totally monotonic games

\MSC[2010] 91A12

\end{keyword}

\begin{abstract}
Incomplete cooperative games generalise the classical model of cooperative games by omitting the values of some of the coalitions. This allows to incorporate uncertainty into the model and study the underlying games as well as possible payoff distribution based only on the partial information. In this paper we perform a systematic study of incomplete games, focusing on two important classes of cooperative games: positive and convex games.

Regarding positivity, we generalise previous results for a special class of minimal incomplete games to general setting.  We characterise non-extendability to a positive game by the existence of a certificate and provide a description of the set of positive extensions using its extreme games. The results are then used to obtain explicit formulas for several classes of incomplete games with special structures.

The second part deals with convexity. We begin with considering the case of non-negative minimal incomplete games. Then we survey existing results in the related theory of set functions, namely providing context to the problem of completing partial functions. We provide a characterisation of extendability and a full description of the set of symmetric convex extensions. The set serves as an approximation of the set of convex extensions.

Finally, we outline an entirely new perspective on a connection between incomplete cooperative games and cooperative interval games.
\end{abstract}

\end{frontmatter}


\section{Introduction and motivation}

Developing various approaches to deal with uncertainty is heavily intertwined
with decision making and thus also with game theory. It is only natural since inaccuracy in
data is an everyday problem in real-world situations, be it a lack of
knowledge on the behaviour of others, corrupted data, signal noise or a
prediction of outcomes such as voting or auctions. Since the degree of
applications is so wide, a lot of models of various complexity and
use-case scenarios exist. Among those models most relevant to cooperative game theory, there are fuzzy cooperative games
\cite{Branzei2008,Mares2001,Mares2004}, multi-choice games
\cite{Branzei2008}, cooperative interval games
\cite{Gok2009a,Gok2011,Gok2009b,Bok2015}, fuzzy interval
games~\cite{Mallozzi2011}, games under bubbly uncertainty
\cite{Palanci2014}, ellipsoidal games~\cite{Weber2010},
and games based on grey numbers~\cite{Palanci2015}.

In the theory of classical (transferable utility) cooperative games, we know the precise reward (or payoff) for the cooperation of every group of players, called
\emph{coalition}. In incomplete cooperative games, this is
generally no longer true, since only some of the coalition values are known. This models the uncertainty over data and its
consistency. The model was first introduced in literature by
Willson~\cite{Willson1993} in 1993. Willson gave the basic notion of
incomplete game (called there \emph{partially defined games}) and he generalised the definition of the Shapley value for such
games. Two decades later, Inuiguchi and Masuya revived the research.
In~\cite{Masuya2016a}, they focused mainly on the class of
superadditive games (and also briefly mentioned particular cases of convex
and positive games in which precisely the values of singleton coalitions and
grand coalition are known). Further, Masuya~\cite{masuya2021approximated} considered approximations of the Shapley value for incomplete games, and Yu~\cite{xiaohui2021extension} introduced a generalisation of incomplete games to games with coalition structures and studied the proportional Owen value (which is a generalisation of the Shapley value for these games). Apart from that, Bok and Černý cosidered the property of 1-convexity and related solution concepts (values) for incomplete games~\cite{BC1convex}. 

\subsection{Motivation}

We would like to mention some of the motivations for our reseach.

The first one (already mentioned) is that the model can be seen as one of the possible approaches to uncertainty. This is no doubt a widespread problem in real world and thus in applications. Our paper does not discuss specific scenarios but it provides theoretical foundations for such analysis.

The model of incomplete games also has strong connections to other uncertainty models in cooperative game theory. In particular, we discuss some natural connections to cooperative interval games in Section~\ref{sec:conclusion}. 

The uncertainty issue can be actually turned around. The study of incomplete games shows us what kind of information can we infer if we intentionally forget a part of the input. This is especially valuable since in general the size of cooperative games is exponential in the number of players. Thus we can turn otherwise computationally difficult task into an easier one (in most cases, at some cost of precision).

The \emph{set function} is a function having as its domain the power set of a given set. We note that incomplete cooperative games can also be viewed as \emph{partial set
functions}. Indeed such structures were already
studied, yet without highlighting the relation to the theory of incomplete games.
In particular, extension of partial set functions into so called \emph{submodular functions} was studied e.g.
in~\cite{Bhaskar2018,Bhaskar2019,Seshadhri2014}.
We refer to the exhaustive book of Grabisch~\cite{Grabisch2016}
which discusses in a detail connections of various types of set
functions to entirely different parts of mathematics, with cooperative games being one of them.

A main goal of cooperative game theory is the study of \emph{solution concepts}, those are functions assigning a set of payoff vectors to each game. Solution concepts assiging precisely one payoff vector to each game are called \emph{values}. For symmetric games (also studied in Section~\ref{sec:convexity}), tend to behave in a quite simple way. However, general (multi-point) solution concepts, like \emph{core}, \emph{imputations}, \emph{stable sets}, or \emph{the Weber set} are interesting for symmetric games as they remain to be multi-point even with the symmetry.

We also note that for analysing solution concepts of incomplete games, it is absolutely necessary to first analyse possible sets of extension (belonging to some given class). If a certain set of extensions is difficult to describe, it might be advantageous to describe its proper subsets (e.g. symmetric extensions if the incomplete game is symmetric), which then also serve as \emph{approximations} of the former set. Our results here can be also regarded in this way.

\subsection{Main results and structure of the paper}

Our results concern two important classes of games: \emph{convex games} and
their subclass of \emph{positive games}.
Let us highlight the main contributions of this paper and its structure.

\begin{itemize}
    \item In Section~\ref{sec:preliminaries}, we outline the necessary background of the
    cooperative game theory and also introduce fundamental definitions of
    incomplete cooperative games, both needed further in the text.
    \item In Section~\ref{sec:positive-description}, we study positivity of incomplete games in general. We tackle questions considering extendability to a positive extension, boundedness of the set of positive extensions and provide a description of the set of positive extensions using its extreme games in case the set is bounded. 
    \item Section~\ref{sec:positive-description-spec} is focused on three different classes with special structure of the known values. We analyse these classes as an application of the characterisation of extreme games from the previous section.
    \item Section~\ref{sec:convexity} is dedicated to convexity and to \emph{symmetric convex extensions}. We characterise under which conditions an incomplete game is extendable into a symmetric convex extension. We provide the range of each coalition's worth over all such possible extensions and fully describe the set of symmetric convex extensions as a set of convex combinations of its
    \textit{extreme games}. We also provide a geometrical point of view on the
    set of symmetric convex extensions.
    \item Section~\ref{sec:conclusion} concludes the paper by providing connections between the theory of incomplete cooperative games and cooperative interval games.
\end{itemize}

\section{Preliminaries} \label{sec:preliminaries}

\subsection{Classical cooperative games}

Comprehensive sources on classical cooperative game theory are for
example~\cite{Branzei2008,Driessen1988,Gilles2010,Peleg2007}.
For more on applications, see
e.g.~\cite{Bilbao2012,Curiel2013,Lemaire1991}. Here we present only the
necessary background needed for the study of incomplete cooperative
games. The crucial definition is that of a classical cooperative game. We note that the following definition assumes transferable utility (shortly
TU). 

\begin{definition}
    A \emph{cooperative game} is an ordered pair $(N,v)$ where the set $N=\{1,2,\ldots ,n\}$ and $v\colon 2^N \to \mathbb{R}$ is the characteristic function of the cooperative game. Further, $v(\emptyset) = 0$.
\end{definition}

The set of $n$-person cooperative games is denoted by $\Gamma^n$. The subsets of $N$ are called \emph{coalitions} and $N$ itself is called the
\emph{grand coalition}. We often write $v$ instead of $(N,v)$ whenever there
is no confusion over what the player set is. We often associate the characteristic functions $v\colon 2^N\to\mathbb{R}$ with vectors $v\in\mathbb{R}^{2^{\lvert N \rvert}-1}$.

To avoid cumbersome notation, we use the following abbreviations. We often replace singleton set $\left\{i\right\}$ with $i$. We use $\subseteq$ for the relation of ``being a subset of'' and $\subsetneq$ for the relation ``being a proper subset
of''. By $\emptyset\neq S \subseteq N$, we mean $S \subseteq N$ and $S \neq \emptyset$. To denote the sizes of coalitions e.g. $N,S,T$, we often use $n,s,t$, respectively.

A cooperative game $(N,v)$ is \emph{monotonic} if for every $T \subseteq S \subseteq N$, we have $v(T) \leq v(S)$, and it is \emph{superadditive} if for every $S,T \subseteq N$ such that $S \cap T = \emptyset$, we have $v(S) + v(T) \leq v(S \cup T)$. The set of superadditive $n$-person games is denoted by $S^n$.

\begin{definition}\label{def:convex}
    A cooperative game $(N,v)$ is
    \emph{convex} if for every $S, T \subseteq N$,
        \begin{equation}\label{eq:convex-games}
        v(T) + v(S) \le v(S \cup T) + v(S \cap T).
        \end{equation}
        The set of convex $n$-person games is denoted by $\Co$.
\end{definition}

The property of the characteristic function of a convex game is called \emph{supermodularity}. If Conditions~\eqref{eq:convex-games} hold with opposite inequality, we talk about \emph{submodular functions} (see~\cite{Grabisch2016} for more). Each of the three aforementioned classes incorporates a different approach to formalising the concept of bigger coalitions being stronger. Clearly, convex games are superadditive. The class of convex games is maybe the most prominent class in cooperative game theory since it has many applications and it enjoys both elegant and powerful characterisations. Among them, the following
characterisation by Shapley is necessary for proving our results.

\begin{theorem}\label{thm:convchar} \cite{Shapley1971}
    A cooperative game $(N,v)$ is convex if and only if for every $i \in N$ and every
    $S \subseteq T \subseteq N \setminus \{i\}$, it holds that
    $v(S\cup i) - v(S) \leq v(T\cup i) - v(T)$.
\end{theorem}

\emph{Positive games} (also called \emph{totally monotonic games}) form a subset of convex games. The concept of total monotonicity generalises the notion of monotonicity in cooperative games~\cite{Chateauneuf1989,Fujimoto2005}. The class was introduced by Shapley~\cite{Shapley1953} by using \emph{unanimity games}, special positive games, to study his now well-known value.

\begin{definition}
    For every nonempty $T \subseteq N$, the \emph{unanimity game} $(N,u_T)$ is
    defined as
    $$u_T(S) \coloneqq
    \begin{cases}
        1 & \text{if }T \subseteq S,\\
        0 & \text{otherwise.}\\
    \end{cases}$$
\end{definition}

The set of all unanimity games forms a basis of the vector space of cooperative $n$-person games, i.e.\ any cooperative $n$-person game $v$ can be expressed as $v=\sum_{\emptyset \neq T \subseteq N}d_v(T)u_T$. Coordinates $d_v(T)$ are called \emph{Harsanyi dividends}. Positive games form a non-negative orthant in this coordinate system. In this paper, we employ a different definition of Harsanyi dividends.

\begin{definition}
    For every nonempty $T \subseteq N$, the \emph{Harsanyi dividend} $d_v(T)$
    of a game $(N,v)$ is defined as
    $$d_v(T) \coloneqq \sum_{S \subseteq T}(-1)^{\lvert T \setminus S \rvert}v(S).$$
\end{definition}

\begin{definition}
    A cooperative game $(N,v)$ is \emph{positive} if the Harsanyi
    dividend $d_v(T)$ is non-negative for every nonempty $T \subseteq N$. The set of positive $n$-person games is denoted by $\P$.
\end{definition}

The results concerning positive games are relatively sparse and scattered (the games are also called \emph{totally monotonic games}). We would like to refer the reader to~\cite{Choquet1954,Gilboa1994,Grabisch2000,Rota1964} for further resources.

The dividend $d_v(T)$ has another interesting property; it is also the value
corresponding to $T$ in \emph{Möbius transform} of value $v(T)$
(see~\cite{Grabisch2016} for further details). Another way to
represent $d_v$ is the following. Let
$$d_v(T) = 
\begin{cases}
    0, & \text{if } T = \emptyset,\\
    v(i), & \text{if } T = i \text{ for } i \in N,\\
    v(T) - \sum_{S \subsetneq T}d_v(S), &\text{if } T \subseteq N, \lvert T \rvert > 1.\\
\end{cases}$$

The general case of both positive and convex games
yields very complex situations and so we often restrict ourselves
to simpler subclasses of these games. One of the possible approaches is to impose the additional property of symmetry.

\begin{definition}
    A cooperative game $(N,v)$ is said to be
\emph{symmetric} if for every $S,T \subseteq N$ such that   $\lvert S \rvert =
\lvert T \rvert$, it holds that $v(S) = v(T)$.
\end{definition}

We denote the sets of symmetric convex and symmetric positive $n$-person games by $\Co_\sigma$ and $\P_\sigma$, respectively. It is easy to observe that symmetric games can be described in a succinct way. This helps in our analysis and provides an interesting view-point on our results; while the set of symmetric convex extensions is easy to describe, it is not the case for symmetric positive extensions. We provide examples showing that trying to describe such extensions is a difficult problem to tackle.

\subsection{Incomplete cooperative games}

The following definitions are inspired by~\cite{Masuya2016a}.

\begin{definition}\emph{(Incomplete game)}
    An incomplete game is a tuple $(N,\mathcal K,v)$ where the set $N = \{1,\ldots,n\}$, $\mathcal K \subseteq 2^N$ is the set of coalitions with known values and $v\colon \mathcal K \to \R$ is the characteristic function of the incomplete game. Further, $\emptyset \in \mathcal K$ and $v(\emptyset)=0$.
\end{definition}

Compared to definitions in~\cite{Masuya2016a}, our assumptions are slightly more general. Most importantly, we do not a priori assume that the values of singleton coalitions and the grand coalition are among the known values, i.e.\ that they are in~$\K$.

The fundamental tool to analyse incomplete games are their \emph{$C$-extensions}.

\begin{definition}
    Let $C\subseteq \Gamma^n$ be a class of $n$-person games. A cooperative game $(N,w) \in C$ is a \emph{$C$-extension} of an incomplete game $(N,\K,v)$ if $w(S)=v(S)$ for every $S \in \mathcal K$. 
\end{definition}

The set of all $C$-extensions of an incomplete game $(N,\K,v)$ is denoted by $C(v)$. We write \emph{$C(v)$-extension} whenever we want to emphasize the game $(N,\K,v)$. Also, if there is a $C(v)$-extension, we say $(N,\K, v)$ is \emph{$C$-extendable}. Finally, the set of all $C$-extendable incomplete games with fixed $\K$ is denoted by $C(\K)$.

The sets of $C$-extensions studied in this text are always convex. One of the main goals of the model of incomplete cooperative games is to describe these sets using their extreme points and extreme rays whenever the description is possible. We refer to the extreme points as to \emph{extreme games}. For the sake of completeness, let us recall a formal definition of extreme points. This particular definition is being used later in our proofs.

\begin{definition}\label{def:extreme-points}
    Let $K$ be a convex set. A point $x \in K$ is an \emph{extreme point} (or \emph{vertex}) of $K$ if there is no way to express $x$ as a convex combination $\lambda y + (1-\lambda)z$ such that $y,z \in K$ and $0 \leq \lambda \leq 1$, except for taking $y=z=x$.
\end{definition}

If the structure of $C(v)$ is too difficult to describe and it is bounded from either above or from below, we introduce the lower and the upper game. 

\begin{definition}\label{def:lower-upper-games}
    \emph{(The lower and the upper game of a set of $C$-extensions)} Let
    $(N,\mathcal K,v)$ be a $C$-extendable incomplete game. If $C(v)$ is bounded then the \emph{lower game} $(N,\underline{v})$ and the \emph{upper game} $(N,\overline{v})$ of $C(v)$ are complete games such that for every $(N,w) \in C(v)$ and for every $S \subseteq N$, we have
    \[\underline{v}(S) \leq w(S) \leq \overline{v}(S)\]
    and for every $S \subseteq N$, there are $(N,w_1),(N,w_2) \in C(v)$ such that 
    \[\underline{v}(S) = w_1(S) \text{ and } \overline{v}(S)=w_2(S).\]
\end{definition}

It is important to note that if $C(v)$ is bounded only from above and not from below, the lower game does not exist. The analogous holds for the case of being bounded only from below. These games delimit the area of $\mathbb{R}^{2^n}$ that contains the set of $C$-extensions. Even if we know the description of $C(v)$, the lower and the upper game are still useful as they encapsulate the range of possible profits of coalition $S$ across all possible $C$-extensions by the interval $\left[\underline{v}(S),\overline{v}(S)\right]$.

We remark that it is important to distinguish between lower and upper games of
different sets of extensions. For example, lower games of superadditive and
convex extensions do not coincide in general. There are also examples of
incomplete games where the set of convex extensions might be empty and
therefore, the lower game of convex extensions might not exist.  However, the
same incomplete game can have the lower game of superadditive extensions.

As we already mentioned, we are interested in the property of symmetry in
games. The following is the generalisation of this property to incomplete
games.

\begin{definition}
    An incomplete game $(N,\mathcal{K},v)$ is \emph{symmetric} if for
    every $K_1,K_2 \in \mathcal{K}$ such that $\lvert K_1 \rvert = \lvert K_2
    \rvert$, the equality $v(K_1) = v(K_2)$ holds. 
\end{definition}

\section{Positive extensions}\label{sec:positive-description}

 In~\cite{Masuya2016a}, Masuya and Inuiguchi studied $\P$-extensions of incomplete games with special structure, namely $(N,\K,v)$ with $\K = \{\emptyset, N\} \cup \{\{i\} \mid i \in N\}$ and $v(S) \geq 0$ for $S \in \K$. In our text, we refer to these games as \emph{non-negative minimal incomplete games}. In their work, as a consequence of an approach slightly different from ours, they do not consider the question of $\P$-ex\-ten\-da\-bi\-li\-ty. To characterise $\P$-extendability of non-negative minimal incomplete games, we denote $\Delta \coloneqq v(N) - \sum_{i \in N}v(i)$ and $N_1\coloneqq \{T \subseteq N \mid \lvert T \rvert > 1\}$.
\begin{theorem}
    Let $(N,\K,v)$ be a non-negative minimal incomplete game. It is $\P$-extendable if and only if $\Delta \geq 0$.
\end{theorem}
\begin{proof}
    If $\Delta \geq 0$, it immediately follows that game $(N,w^*)$ defined using its dividends as 
    \[
    d_{w^*}(S) \coloneqq \begin{cases}
        v(i) & \textit{if } S = \{i\},\\
        \Delta & \textit{if } S = N,\\
        0 & \textit{otherwise},\\
    \end{cases}
    \]
    is $\Pv$-extension. If $\Delta < 0$, it follows for any $\Pv$-extension $(N,w)$ that
    \[
    \Delta = \sum_{\emptyset \neq S \subseteq N}d_w(S) - \sum_{i \in N}\delta_w(i) = \sum_{S \in N_1}d_w(S) < 0.
    \]
    As $d_w(S) \geq 0$ for every $S \in N_1$, this leads to a contradiction.
\qed \end{proof}

In~\cite{Masuya2016a}, they showed the lower and the upper game of $\P$-extensions coincide with those of $S^n$-extensions. Games $(N,\underline{v})$ and $(N,\overline{v})$ are defined as
\begin{equation}\label{eq:min-info-lower-upper-games}
\underline{v}(S) \coloneqq \begin{cases}
    v(S), & \cif S \in\K,\\
    \sum_{i \in S}v(i), & \cif S \notin \K,\\
\end{cases}\text{ and }
\overline{v}(S) \coloneqq \begin{cases}
    v(S), & \cif S \in\K,\\
    v(N), & \cif S \notin\K.\\
\end{cases}
\end{equation}

The set of $\P$-extensions of non-negative minimal incomplete games is described in~\cite{Masuya2016a} as a set of all convex combinations of its extreme points. Those correspond to games $(N,v^T)$, parametrised by coalitions $\emptyset \neq T \subseteq N$,
\begin{equation}\label{eq:def-vt}
    v^T(S) = \begin{cases}
        0, &  S = \emptyset,\\
        \Delta + \sum_{i \in S}v(i), &  S \notin\K \cand T \subseteq S,\\
        \sum_{i \in S}v(i), &  S \notin\K\cand T \subsetneq S.\\
    \end{cases}
\end{equation}
Notice similarity between games $(N,v^T)$ and unanimity games $(N,u_T)$.
\begin{theorem}\label{thm:positive-extensions-min-info}\cite{Masuya2016a}
    Let $(N,\K,v)$ be a non-negative minimal incomplete game and let $(N,v^T)$ for $T \in N_1$ be games from (\ref{eq:def-vt}). The set of $\P$-extensions can be expressed as
    \begin{equation}\label{eq:positive-extensions}
        \Pv=\left\{\sum_{T \in N_1}\alpha_Tv^T \mid \sum_{T \in N_1}\alpha_T=1, \alpha_{T} \geq 0\right\}.
    \end{equation}
\end{theorem}

In this section, we generalise the discussed results to general setting. In Subsection~\ref{subsec:positive-extendability}, we provide a characterisation of $\P$-extendability based on duality of linear programming and give an example of its application in a time-complexity analysis of $\P$-extendability of incomplete games with special structures. We also give a sufficient and necessary condition for boundedness of $\P(v)$. In Subsection~\ref{subsec:positive-extreme-games}, we investigate a description of the set of $\Pv$-extensions if the set is bounded. We do so by characterising extreme games of the set by following (and slightly modifying) the proof of the sharp form of Bondareva-Shapley theorem (the theorem was introduced independently by Bondareva in 1963~\cite{Bondareva1963} and Shapley in 1967~\cite{Shapley1967}).
 
\subsection{\texorpdfstring{$\P$}{Pn}-extendability and boundedness of \texorpdfstring{$\P$}{Pn(v)}}\label{subsec:positive-extendability}

To provide a certificate for non-$\P$-extendability of an incomplete game, we employ duality of linear systems. This approach was motivated by a work by Seshadhri and Vondr{\'a}k~\cite{Seshadhri2014} and the so called \emph{path certificate} for non-extendability of submodular functions (corresponding to convex games). Although its size is exponential in the number of players in general, for special cases, the solvability of the dual system is polynomial in $n$ and therefore, the $\P$-extendability is polynomially decidable in $n$ for such cases. In the proof of the characterisation, we use the seminal result of Farkas~\cite{Farkas1902}.

\begin{lemma}\emph{(Farkas' lemma, \cite{Farkas1902})}\label{lem:farkas}
    Let $A \in \mathbb R^{m \times n}$ and $b \in \mathbb R^n$. Then exactly one of the following two statements is true.
    \begin{enumerate}
        \item There exists $x \in \mathbb R^n$ such that $Ax = b$ and $x \geq 0$.
        \item There exists $y \in \mathbb R^m$ such that $A^Ty \geq 0$ and $b^Ty \leq -1$.
    \end{enumerate}
\end{lemma}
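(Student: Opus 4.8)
The plan is to establish the stated dichotomy in two halves: that the two alternatives cannot both hold, and that at least one must. Mutual exclusivity is the routine half. If statements~(1) and~(2) held simultaneously, choose $x \geq 0$ with $Ax = b$ and $y$ with $A^{T}y \geq 0$ and $b^{T}y \leq -1$. Then
\[
0 \leq x^{T}(A^{T}y) = (Ax)^{T}y = b^{T}y \leq -1,
\]
a contradiction, so at most one of them is true. The whole difficulty lies in showing that whenever~(1) fails, (2) holds.

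For that substantive half I would assume~(1) fails and work with the set $C := \{Ax : x \in \mathbb{R}^{n},\ x \geq 0\}$, the convex cone generated by the columns of $A$; failure of~(1) means exactly that $b \notin C$. Granting for the moment that $C$ is \emph{closed}, the strict separation theorem for a point and a disjoint closed convex set furnishes a vector $y$ and a scalar $\beta$ with $y^{T}b < \beta \leq y^{T}c$ for every $c \in C$. Since $0 \in C$ we obtain $\beta \leq 0$. Since $C$ is a cone, $tc \in C$ for all $t > 0$ and $c \in C$, so $t\,(y^{T}c) \geq \beta$ for all $t > 0$; were $y^{T}c < 0$ for some $c \in C$, letting $t \to \infty$ would violate this, hence $y^{T}c \geq 0$ for all $c \in C$. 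Taking $c$ to be each column $Ae_{j}$ gives $A^{T}y \geq 0$, while $y^{T}b < \beta \leq 0$ gives $b^{T}y < 0$. Rescaling $y$ by a suitable positive factor converts $b^{T}y < 0$ into $b^{T}y \leq -1$, which is precisely statement~(2).

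The hard part is the hypothesis I quietly invoked, namely that the finitely generated cone $C$ is closed; this is the genuine content of the lemma and the step I expect to be the main obstacle, since images of closed sets under linear maps are not closed in general. I would settle it by a Carath\'eodory-type argument: every point of $C$ is a nonnegative combination of a \emph{linearly independent} subset of the columns of $A$, so $C$ is the \emph{finite} union, over linearly independent column subsets $B$, of the cones $\{B\lambda : \lambda \geq 0\}$. Each such cone is the image of the closed nonnegative orthant under an injective linear map, hence closed (an injective linear map is bounded below, so a convergent image sequence forces the preimages to stay bounded and admit a convergent subsequence with a nonnegative limit), and a finite union of closed sets is closed. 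An elementary alternative that bypasses closedness entirely is induction on the number of columns $n$ via Fourier--Motzkin elimination: eliminate one variable $x_{n} \geq 0$ from $Ax = b$, apply the induction hypothesis to the reduced system, and lift the resulting certificate back, with the trivial base case $n = 0$. Since the result is classical I would ultimately present whichever route is shortest, or simply cite it as stated.
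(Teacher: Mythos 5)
Your argument is correct, but note that the paper does not prove this lemma at all: it is stated as a classical result and attributed to Farkas' 1902 paper, so there is no internal proof to compare against. What you give is the standard modern proof --- easy direction by the chain $0 \leq x^{T}(A^{T}y) = b^{T}y \leq -1$, hard direction by strict separation of $b$ from the cone $C = \{Ax : x \geq 0\}$ --- and you correctly identify that the entire content sits in the closedness of $C$, which you settle by the conic Carath\'eodory decomposition into cones over linearly independent column subsets, each closed as the image of the nonnegative orthant under an injective (hence bounded-below) linear map. The normalisation step from $b^{T}y < 0$ to $b^{T}y \leq -1$ by positive rescaling is also needed for this particular phrasing of the alternative and you handle it. One small point worth flagging: the statement as printed has $b \in \mathbb{R}^{n}$, but since $Ax \in \mathbb{R}^{m}$ the vector $b$ should live in $\mathbb{R}^{m}$ (and correspondingly $y \in \mathbb{R}^{m}$ is consistent with $b^{T}y$ only under that reading); your proof implicitly uses the correct dimensions. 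Given that the result is classical and merely cited here, either of your proposed routes (separation or Fourier--Motzkin) would be acceptable, but simply citing it, as the paper does, is the intended treatment.
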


\begin{theorem}\label{thm:positive-extendability}
    Let $(N,\K,v)$ be an incomplete game. The game is $\P$-extendable if and only if the following system of linear equations is not solvable:
    \begin{enumerate}
        \item $\forall\,T \subseteq N, T \neq \emptyset: \sum_{S \in \K, T \subseteq S} y(S) \geq 0$,
        \item $\sum_{S \in \K}v(S)y(S) \leq -1$.
    \end{enumerate}
\end{theorem}
\begin{proof}
    Let $M \coloneqq 2^{n}-1$ and $U' \in \mathbb{R}^{M \times M}$
    be a matrix with characteristic vectors of unanimity games $u_T$ as its
    columns. Then it holds that $U'd = w$ for every game $(N,w)$ and its
    vector of Harsanyi dividends $d$.
    
    For an incomplete cooperative game $(N,\K,v)$, we reduce matrix
    $U'$ by deleting the rows corresponding to coalitions with unknown values,
    reaching a system $Ud=v$. This adjustment eliminates unknowns on the right
    hand side of the equation, yet no information about the complete game is
    lost since the vector of Harsanyi dividends carries full information. 
    
    Game $(N,\K,v)$ is $\P$-extendable if and only if $Ud=v$ is solvable for $d \geq0$. By Farkas' lemma (Lemma~\ref{lem:farkas}) this happens if and only if
    the following system has no solution,
    \begin{equation}\label{eq:farkas}
        U^Ty \geq 0 \text{ and } v^Ty \leq -1.
    \end{equation}
    The conditions given by~\eqref{eq:farkas} correspond to those from
    the statement of the theorem.
\qed \end{proof}

Notice that even though the number of inequalities $\sum_{S \in K, T \subseteq
    S} y(S) \geq 0$ is $2^{n}-1$ (since we have one inequality for
every $\emptyset \neq T \subseteq N$), the actual number of \textit{distinct}
inequalities is not larger than $2^{\lvert \K \rvert}-1$ because
each inequality sums over a subset of $\K$.  Depending on the
structure of $\K$, the actual number might be even smaller as is shown in
the following result.

\begin{theorem}\label{prop:positive-extend-example}
    Let $(N,\K,v)$ be an incomplete game such that sizes of all $S \in \K$ are bounded by a fixed constant $c$. Then the problem of $\P$-extendability is po\-ly\-no\-mially-time solvable in $n$. 
\end{theorem}
\begin{proof}
    In $(N,\K,v)$, the number of coalitions with a defined value is at most $\sum_{i = 1}^{c}{n\choose i}$,
    which is a polynomial in $n$. Also, if we consider the linear system from
    Theorem~\ref{thm:positive-extendability}, every $T \subseteq N$ such that $\lvert
    T \rvert > c$ yields an empty sum in its corresponding inequality. Therefore,
    the number of unique conditions in the problem is bounded by the number
    of coalitions with defined value, that is by the sum $\sum_{i =
        1}^{c}{n\choose i}$. We conclude that the linear system can be solved in
    polynomial time by means of linear programming.
\qed \end{proof}

Now we address the question of boundedness of $\Pv$. Notice, the set of $\P$-extensions is always bounded from below, as for every $\P$-extension $(N,w)$, $w(S) = \sum_{\emptyset\neq T \subseteq S}d_w(T)$ and $d_w(T) \geq 0$ for every $\emptyset \neq T \subseteq N$. Therefore, $0$ serves as a lower bound the value of any coalition $S \subseteq N$. To find the lower bound that is binding the profit of every coalition as well as the binding upper bound (hence the lower and the upper game) remains an open problem.

\begin{theorem}\label{prop:positive-bound}
    Let $(N,\K,v)$ be a $\P$-extendable incomplete game. The set of positive extensions $\P(v)$ is bounded if and only if $N \in \K$.
\end{theorem}
\begin{proof}
    If $N \in \K$, then for any $\Pv$-extension $(N,w)$, $\sum_{T
        \subseteq N}d_w(T) = v(N)$, and since for all $\emptyset \neq T \subseteq N$, $d_w(T) \geq
    0$, it follows $d_w(T) \in \left[0,v(N)\right]$. This yields a bound
    (possibly an overestimation) for all possible values of $d_w(T)$. Since the
    dividends are bounded, the set $\P(v)$ is also bounded.
    
    If $N \notin \K$, then the value of coalition $N$ can be arbitrarily large, since there is no upper bound on $d_w(N)$ for a $\P(v)$-extension $(N,w)$. Thus, $\P(v)$ is not bounded.
\qed \end{proof}

\subsection{Description of the set of \texorpdfstring{$\P$}{Pn}-extensions}\label{subsec:positive-extreme-games}

For an incomplete game $(N,\K,v)$, the set of $\Pv$-extensions can be described as
\[
\P(v) = \bigg\{(N,w) \Big|\, \forall S \in \K: w(S) = v(S) \text{ and } \forall T\subseteq N: d_w(T) \geq 0\bigg\},
\]
or equivalently in terms of dividends and $M \coloneqq 2^{n}-1$ as
\[
\P_d(v) \coloneqq \bigg\{ d_w \in \mathbb{R}^{M}\Big|\, \forall S \in \K: \sum_{T \subseteq S} d_w(T) = v(S),\forall T \subseteq N: d_w(T) \geq 0\bigg\}.
\]
Notice that $\P(v) \neq \P_d(v)$ since the former is a set of cooperative games and the latter is a set of vectors of dividends.

Both sets are formed by intersections of closed half-space, thus they are (convex) polyhedrons. If we suppose that $(N,\K,v)$ is $\P$-extendable, then both sets are nonempty. Furthermore, the sets are bounded if and only if $N \in \K$. Bounded convex polyhedrons are convex hulls of their extreme points.

To be able to freely neglect the distinction between extreme points of both sets, we recall a basic result from linear algebra. For a sake of completeness, we include the proof.

\begin{lemma}\label{lem:extreme-image}
    Let $P$ be a convex subset of $\mathbb{R}^{n}$, $A \in \mathbb{R}^{n \times
        n}$ a nonsingular matrix, and $x \in P$ an extreme point of $P$. Then $Ax$ is
    an extreme point of the convex set $A(P) \coloneqq \left\{Au \lvert u \in P\right\}$.
\end{lemma}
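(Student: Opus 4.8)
The plan is to show that $Ax$ is extreme in $A(P)$ by contradiction: assume $Ax$ admits a nontrivial convex decomposition inside $A(P)$ and pull it back through $A^{-1}$ to produce a nontrivial convex decomposition of $x$ inside $P$, contradicting that $x$ is extreme. The nonsingularity of $A$ is exactly what makes this pullback well-defined and injective, so it is the crucial hypothesis.

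First I would note that $A(P)$ is convex (the stated basic result), so it makes sense to speak of its extreme points. Then, suppose for contradiction that $Ax$ is not extreme in $A(P)$. By definition this means there exist points $p, q \in A(P)$ with $p \neq q$ and a scalar $\alpha \in (0,1)$ such that $Ax = \alpha p + (1-\alpha) q$. Since $p, q \in A(P)$, I can write $p = Au$ and $q = Aw$ for some $u, w \in P$. Because $A$ is nonsingular (hence injective), $p \neq q$ forces $u \neq w$.

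Next I would substitute and factor out $A$, using linearity:
\[
Ax = \alpha Au + (1-\alpha) Aw = A\bigl(\alpha u + (1-\alpha) w\bigr).
\]
Applying $A^{-1}$ to both sides (this is where nonsingularity is used a second time, to cancel $A$) yields $x = \alpha u + (1-\alpha) w$ with $u, w \in P$, $u \neq w$, and $\alpha \in (0,1)$. This exhibits $x$ as a nontrivial convex combination of two distinct points of $P$, contradicting the assumption that $x$ is an extreme point of $P$. Hence $Ax$ must be an extreme point of $A(P)$.

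The argument is short and the only subtlety — really the whole content of the lemma — is the repeated use of nonsingularity: once to deduce $u \neq w$ from $p \neq q$, and once to invert $A$ and recover the decomposition of $x$. There is no serious obstacle; the main thing to be careful about is not to conflate ``$Ax$ has a nontrivial decomposition'' with ``$x$ has one'' without explicitly invoking injectivity to guarantee the two component points remain distinct after applying $A^{-1}$.
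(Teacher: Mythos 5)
Your proof is correct and follows essentially the same route as the paper's: assume $Ax$ is not extreme, write the decomposition as $Ax = A(\alpha u + (1-\alpha)w)$, and pull it back to contradict extremality of $x$. You are in fact slightly more careful than the paper, which does not explicitly invoke injectivity to guarantee the two preimage points are distinct.
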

\begin{proof}
    Suppose that $x \in P$ is an extreme point of $P$ and the image $Ax$ is not an
    extreme point of $A(P)$. Therefore, there are $Au,Av \in A(P)$ and
    $\alpha \in (0,1)$ such that $\alpha Au + (1 - \alpha) Av = Ax$. But then
    $\alpha Au + (1 - \alpha) Av = A (\alpha u + (1 - \alpha)v)=Ax$, and
    therefore, $x$ is not an extreme point of $P$, as it is a nontrivial convex combination of $u,v \in P$. This is a contradiction.
\qed \end{proof}

Let $U \in \mathbb R^{M\times M}$ be a matrix with vectors of unanimity games $u_T \in \mathbb{R}^{M}$ as columns. It holds that $Ud_w = w$ where $w \in \mathbb{R}^{M}$ is a characteristic vector of game $(N,w)$ and $d_w \in \mathbb{R}^{M}$ represents a vector of Harsanyi dividends of the game. Since unanimity games form a basis of $\mathbb R^M$, the matrix $U$ is nonsingular and thus, by Lemma~\ref{lem:extreme-image}, the extreme points of $\P(v)$ correspond to those of $\P_d(v)$, allowing us to further consider those instead of the former ones.

Following the proof of the sharp form of Bondareva-Shapley theorem from~\cite{Peleg2007}, we give an insight into the description of extreme games of $\P(v)$. We show that for these games, the set of coalitions zero dividends is inclusion-wise maximal.

Our result is based on the following characterisation of extreme points of polyhedrons.

\begin{lemma}\cite{Peleg2007}\label{lem:pos_extreme_points}
    Let $P$ be a polyhedron given by
    \[P \coloneqq \left\{x \in \mathbb{R}^k \big\lvert\, \sum_{j=1}^k a_{tj}x_j \geq b_t, t = 1,\dots, m\right\}.\]
    For $x \in P$, let $S(x) \coloneqq \big\{t \in \{1,\dots,m\}\lvert\, \sum_{j=1}^ka_{tj}x_j=b_t\big\}$. The point $x \in P$ is an extreme point of $P$ if and only if the system of linear equations 
    \[
    \sum_{j=1}^ka_{ij}y_j = b_t \text{ for all } t \in S(x) 
    \]
    has $x$ as its unique solution.
\end{lemma}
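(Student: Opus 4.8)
The plan is to prove the standard characterisation of extreme points of a polyhedron (the ``active-constraint'' or ``rank'' condition) stated in Lemma~\ref{lem:pos_extreme_points}. I would establish both implications directly from the definition of an extreme point, namely that $x$ is extreme if and only if it cannot be written as $\tfrac{1}{2}(u+v)$ for two distinct points $u,v \in P$. Throughout I write $S(x)$ for the set of indices $t$ for which the constraint $\sum_j a_{tj}x_j \ge b_t$ holds with equality at $x$, and I think of each row $a_t := (a_{t1},\dots,a_{tk})$ as a vector in $\mathbb{R}^k$.

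For the contrapositive of ``$\Leftarrow$'', suppose the system $\{a_t^\top y = b_t : t \in S(x)\}$ does not have $x$ as its unique solution. Since $x$ itself is one solution, non-uniqueness means the homogeneous system $a_t^\top z = 0$ for all $t \in S(x)$ has a nonzero solution $z$. I would then set $u := x + \varepsilon z$ and $v := x - \varepsilon z$ for a small enough $\varepsilon > 0$ and check that both lie in $P$: for the \emph{active} constraints $t \in S(x)$ we have $a_t^\top u = a_t^\top x \pm \varepsilon a_t^\top z = b_t$, so they remain tight; for the finitely many \emph{inactive} constraints $t \notin S(x)$ we have strict inequality $a_t^\top x > b_t$ at $x$, so by choosing $\varepsilon$ small enough (smaller than $\min_{t \notin S(x)} (a_t^\top x - b_t)/(\lvert a_t^\top z\rvert)$ over the indices with $a_t^\top z \neq 0$) the inequality is preserved. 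Then $x = \tfrac{1}{2}(u+v)$ with $u \neq v$, so $x$ is not extreme.

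For ``$\Rightarrow$'', I argue again by contraposition: assume $x$ is not extreme, so $x = \tfrac{1}{2}(u+v)$ with $u,v \in P$ distinct, and put $z := u - x \neq 0$, so that $u = x+z$ and $v = x-z$. For each active constraint $t \in S(x)$ we have $b_t = a_t^\top x = \tfrac{1}{2}(a_t^\top u + a_t^\top v)$ while $a_t^\top u \ge b_t$ and $a_t^\top v \ge b_t$; the average equalling $b_t$ forces both to equal $b_t$, hence $a_t^\top z = a_t^\top u - a_t^\top x = 0$. Thus $x + z$ is a solution of the active system $\{a_t^\top y = b_t : t \in S(x)\}$ different from $x$, so $x$ is not its unique solution.

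None of these steps presents a genuine obstacle; this is the classical vertex characterisation, and the paper explicitly notes the argument mirrors Peleg and S\"udholter~\cite{peleg2007introduction}, which is why the excerpt omits the proof. The only point demanding a little care is the choice of $\varepsilon$ in the ``$\Leftarrow$'' direction: one must verify that the perturbation stays feasible, which relies crucially on there being only finitely many constraints and on the strict slack at the inactive ones. Since the polyhedron in question is the dividend polytope $\mathcal{P}'(v)$ cut out by finitely many inequalities $d_w(T) \ge 0$ and equalities $\sum_{T \subseteq S} d_w(T) = v(S)$ (the latter being a pair of opposing inequalities), the lemma applies verbatim and will be the tool linking minimal balanced collections, via Lemma~\ref{lem:min-collection}, to the extreme points of $\mathcal{P}'(v)$ and hence, through Lemma~\ref{lem:extreme-image}, to the extreme games of $\mathcal{P}(v)$.
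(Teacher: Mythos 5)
Your proof is correct and is exactly the standard argument the paper has in mind when it omits the proof with a reference to Peleg and S\"udholter: both implications are established soundly, including the careful choice of $\varepsilon$ using the strict slack of the finitely many inactive constraints. The only blemish is that your two direction labels are swapped --- the argument you head ``$\Leftarrow$'' (not unique $\Rightarrow$ not extreme) is the contrapositive of ``extreme $\Rightarrow$ unique'', and the one you head ``$\Rightarrow$'' is the contrapositive of the converse --- but since both implications are proved, the mathematics is unaffected.
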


Applying Lemma~\ref{lem:pos_extreme_points}, $d_e \in \mathbb{R}^M$ is an extreme game of $\P_d(v)$ if and only if there is no $d_w\neq d_e$ such that $d_w(T) = 0 \iff d_e(T)=0$ for every nonempty $T \subseteq N$. For any $\P(v)$-extension $(N,w)$, we denote by $E(w)$ the \textit{set of negligible coalitions} defined as $E(w)\coloneqq\{T \subseteq N \mid d_w(T)=0\}$. This set proves itself useful in the following lemma. The lemma states that inclusion-maximality of $E(e)$ across $E(x)$ for $d_x \in \P_d(v)$ is equivalent with uniqueness of $E(e)$ across $E(x)$ for $d_x \in \P_d(v)$. Together with Lemma~\ref{lem:pos_extreme_points}, this connects the extremality of games with the inclusion-ma\-xi\-ma\-li\-ty of sets $E(e)$.
\begin{lemma}\label{lem:max-unique-char}
    Let $(N,\K,v)$ be a $\P$-extendable incomplete game and $d_e \in \P_d(v)$. Then the following are equivalent:
    \begin{enumerate}
        \item there is no $d_x\in \P_d(v)$ such that $E(e) \subsetneq E(x)$,
        \item there is no $d_y\in\P_d(v)$ different from $d_e$, such that $E(e) = E(y)$.
    \end{enumerate}
\end{lemma}
\begin{proof}
First, suppose that there is $d_x \in \P_d(v)$ such that $E(e) \subsetneq E(x)$. We show that there is not only one, but infinitely many vectors $d_{y^\alpha} \in P_d(v)$ different from $d_e$ such that $E(e)=E(y^\alpha)$. The idea is to take any non-trivial convex combination $d_{y^\alpha}\coloneqq\alpha d_e + (1-\alpha)d_x$ for $0 < \alpha < 1$. Such game is clearly positive (a convex combination of non-negative dividends remains non-negative) as it is also an extension of $(N,\K,v)$, because for every $S \in \K$,
\[
\sum_{T \subseteq S} d_{y^\alpha}(T) = \alpha \sum_{T \subseteq S} d_{e}(T) + (1-\alpha) \sum_{T \subseteq S} d_{x}(T) = \alpha v(S) + (1-\alpha)v(S) = v(S).
\]
And since $d_x \neq d_e$, there is $S \notin\K$ such that $x(S)\neq e(S)$ for which
\[
y^\alpha(S) = \sum_{T \subseteq S}d_{y^\alpha}(T) = \alpha \sum_{T \subseteq S}d_{x}(T)  + (1-\alpha) \sum_{T \subseteq S}d_{e}(T) = \alpha x(S) + (1-\alpha)e(S).
\]
Therefore, any two parameters $\alpha_1,\alpha_2$ such that $0 < \alpha_1 < \alpha_2 < 1$ yield different values $y^{\alpha_1}(S) \neq y^{\alpha_2}(S)$, thus $d_{y^{\alpha_1}}\neq d_{y^{\alpha_2}}$.

Now suppose that there is $d_y \in \P_d(v)$ different from $d_e$ such that $E(e)=E(y)$. We take a combination $d_z=d_e - \beta (d_y - d_e)$ with $\beta$ such that for at least one $S \notin E(e)$, $d_z(S)=0$. Thus $E(e) \subseteq E(z)$ and still, $d_z \in \P_d(v)$. For such $S$, it must hold
\[
d_z(S) = d_e(S) - \beta (d_y(S) - d_e(S)) = 0,
\]
therefore $\beta = \frac{d_e(S)}{d_y(S)-d_e(S)}$. We have to choose $S$ such that $d_y(S)\neq d_e(S)$. Furthermore, we have to secure that for every $T \notin E(e)$, $d_z(T) \geq 0$, or equivalently
\begin{align*} 
d_z(T) &= d_e(T) - \beta (d_y(T) - d_e(T))\\
&= d_e(T) - \frac{d_e(S)}{d_y(S)-d_e(S)} (d_y(T) - d_e(T)) \geq 0
\end{align*}
This can be done by taking the minimum for $S$ over all such coalitions $T$, i.e.\ 
\[\beta \coloneqq \min\limits_{T \notin E(e): d_e(T)\neq d_y(T)}\frac{d_e(T)}{d_y(T)-d_e(T)}.\] 
Then for $T \notin E(e)$, $d_z(T)\geq0$, since it is equal to
\[
d_e(T) - \frac{d_e(S)}{d_y(S)-d_e(S)} (d_y(T) - d_e(T)) \geq d_e(T) - \frac{d_e(T)}{d_y(T)-d_e(T)} (d_y(T) - d_e(T)).
\] 
Clearly, the last expression is equal to zero. Finally, for $K \in \K$,
\[
z(K) = \sum_{C \subseteq K}d_z(K) = \sum_{C \subseteq K}d_e(K) - \beta \left(\sum_{C \subseteq K} d_y(K) - \sum_{C \subseteq K}d_e(K)\right),
\]
and since all the three sums in the last expression are equal to $v(K)$, we conclude that $z(K)=v(K)$ and thus, $d_z \in \P_d(v)$.
\qed \end{proof}

The following characterisation of extreme points follows as a direct application of Lemma~\ref{lem:pos_extreme_points} and~\ref{lem:max-unique-char}.

\begin{theorem}\label{thm:vertex-positive}
    For $\P$-extendable incomplete game $(N,\K,v)$, it holds $(N,e)$ is an extreme game of $\Pv$ if and only if its set of negligible coalitions $E(e)$ is inclusion-maximal, i.e.\ there is no $(N,w)\in \P(v)$ such that $E(e) \subsetneq E(w)$.
\end{theorem}

\section{Application to analysis of positive extensions for special cases}\label{sec:positive-description-spec}

This section contains an analysis of $\P$-extensions of several classes of incomplete games. 
We show a direct application of Theorem~\ref{thm:vertex-positive} to the description of the set of $\P$-extensions for three classes of incomplete games. We do not show only a derivation of extreme games but also a derivation of the lower and the upper game together with a characterisation of $\P$-extendability.

\subsection{Pairwise disjoint coalitions of known values}
For the first class of incomplete games it holds that the coalitions with known values (excluding $N$) are pairwise-disjoint.

\begin{theorem}\label{thm:empty-int}
    Let $(N,\K,v)$ be a $\P$-extendable incomplete game, where $\K = \left\{S_1,\dots,S_{k-1},N\right\}$ and for all $i,j \in \{1,\ldots,k-1 \}$,
    it holds that $S_i \cap S_j = \emptyset$. Then the extreme games
    $v^{\mathcal{T}}$, the lower game $\underline{v}$, and the upper game $\overline{v}$ can be described as follows:
    $$v^{\mathcal{T}}(S)\coloneqq\begin{cases}
        0, & \text{if } \nexists T \in \K: T \subseteq S,\\
        \sum_{i:T_i \subseteq S}v(S_i), & \text{if } \exists T \in \K: T \subseteq S \text{ and } T_N \nsubseteq S\\
        v(N) - \sum_{i: T_i \nsubseteq S}v(S_i), & \text{if } \exists T \in \K: T \subseteq S \text{ and } T_N \subseteq S,\\
    \end{cases}$$
    \[
    \underline{v}(S) \coloneqq v^{\K}(S)=\begin{cases}
        0, & \text{if } \nexists T \in \K: T \subseteq S,\\
        \sum_{i:S_i \subseteq S}v(S_i), & \text{if } \exists T \in \K: T \subseteq S \text{ and } N \neq S,\\
        v(N), & \text{if } \exists T \in \K: T \subseteq S \text{ and } N = S,\\
    \end{cases}
    \]
    \[
    \overline{v}(S) \coloneqq
    \begin{cases}
        v(S_i), & \text{if } S \subseteq S_i,\\
        v(N) - \sum_{i: S_i \nsubseteq S}v(S_i), & \text{otherwise},\\
    \end{cases}
    \]
    where $\mathcal{T} \coloneqq \left\{T_1,\dots,T_{k-1},T_N\right\}$ such that $T_i
    \subseteq S_i$, $T_N \subseteq N$ and $T_N \nsubseteq S_\ell$ for any $\ell
    \in \{1,\ldots,k-1\}$.\\
    Furthermore, the $\P$-extendability of $(N,\K,v)$ is characterised by a condition \[v(N) \geq \sum_{i=1}^{k-1} v(S_i).\]
\end{theorem}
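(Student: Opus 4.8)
The plan is to work entirely in the dividend representation $\mathcal P'(v)$, in which a positive extension corresponds to a nonnegative vector $(d(T))_{T\subseteq N}$ satisfying $\sum_{T\subseteq S_i}d(T)=v(S_i)$ for $i=1,\dots,k-1$ and $\sum_{T\subseteq N}d(T)=v(N)$. The feature that makes everything tractable is disjointness: a nonempty $T$ is contained in at most one $S_i$, so the $k-1$ block equations act on pairwise disjoint sets of variables, while the grand-coalition equation sums over all of them. Summing the block equations and subtracting from the $N$-equation shows that the total dividend carried by coalitions contained in no $S_i$ must equal $v(N)-\sum_i v(S_i)$. Feasibility of the $S_i$-block with $d\ge 0$ forces $v(S_i)\ge 0$, and this last identity forces $v(N)-\sum_i v(S_i)\ge 0$; conversely, setting $d(S_i)=v(S_i)$ and $d(N)=v(N)-\sum_i v(S_i)$ (all other dividends zero) exhibits an admissible solution. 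This settles the final assertion: $(N,\mathcal K,v)$ is positively extendable exactly when $v(N)\ge\sum_i v(S_i)$ (each block being individually feasible, i.e.\ $v(S_i)\ge0$).

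For the extreme games I would invoke Theorem~\ref{thm:vertex-positive}, which applies since $N\in\mathcal K$: an extension is extreme iff its balanced vector is minimal, and by Lemma~\ref{lem:min-collection} minimality means the constraint system pins down the dividends uniquely. Since each $S_i$-block is governed by a single scalar equation and the blocks share no variables, a minimal collection can contain at most one coalition inside each $S_i$ and at most one coalition $T_N$ meeting no $S_i$; any block holding two positive dividends would leave their split undetermined, contradicting uniqueness. This identifies the minimal collections as the sets $\mathcal T=\{T_1,\dots,T_{k-1},T_N\}$ with $T_i\subseteq S_i$ and $T_N\nsubseteq S_\ell$ for all $\ell$, with the forced values $d(T_i)=v(S_i)$ and $d(T_N)=v(N)-\sum_i v(S_i)$. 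Evaluating $v^{\mathcal T}(S)=\sum_{T\in\mathcal T,\,T\subseteq S}d(T)$ and distinguishing whether $T_N\subseteq S$ then yields the displayed three-case formula.

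The lower and upper games come from the fact that $\mathcal P(v)$ is a bounded polytope and $w\mapsto w(S)$ is linear, so $\underline v(S)$ and $\overline v(S)$ are the minimum and maximum of $w(S)=\sum_{T\subseteq S}d(T)$, both attained at vertices; the task is thus to optimise captured dividend mass. To minimise $w(S)$ one pushes every block's mass onto the largest admissible coalition, namely $S_i$ itself, and the excess onto $N$; then $S$ captures only $\sum_{i:\,S_i\subseteq S}v(S_i)$, and since each block with $S_i\subseteq S$ is fully inside $S$ this is also a valid lower bound, giving the extreme game $v^{\mathcal K}=\underline v$. To maximise $w(S)$ one instead concentrates the mass of every block that meets $S$ on a subcoalition of $S$, and puts the excess on $S$ itself whenever $S$ lies in no $S_i$; the only mass that cannot be captured is that of the blocks forced to lie entirely outside $S$, which produces $\overline v$.

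I expect the upper game to be the main obstacle. Its maximisation needs a careful case split on how $S$ overlaps the blocks — in particular the borderline regime $S\subseteq S_i$, where no excess can be absorbed — and one must verify both that the stated value bounds $w(S)$ for every extension and that it is realised by a genuinely nonnegative dividend vector. In carrying this out I would watch closely which blocks count as ``lost'': the obstruction is exactly the blocks disjoint from $S$, and I would check the index set in the formula against this directly. Finally, the degenerate cases $v(S_i)=0$ and $v(N)=\sum_i v(S_i)$ must be treated separately, since then some coalition of $\mathcal T$ carries zero dividend and drops out, and one has to confirm the optimising vectors still correspond to honest extreme games.
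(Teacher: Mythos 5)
Your plan is essentially the paper's own proof: work in dividend coordinates, use Theorem~\ref{thm:vertex-positive} together with Lemma~\ref{lem:min-collection} to reduce extremality to minimality of the balanced collection, exploit disjointness to show a minimal collection carries exactly one coalition per block $S_i$ plus at most one $T_N$ contained in no $S_i$, and read off $\underline v$ and $\overline v$ from the forced dividend values. Two remarks. First, the caveat you add to the extendability criterion (each block must be individually feasible, i.e.\ $v(S_i)\ge 0$) is a genuine refinement: the paper's ``if and only if'' is silently assuming this. Second, and more importantly, your instinct to ``check the index set'' in the upper game against the obstruction ``blocks disjoint from $S$'' is exactly right, and the check fails for the formula as printed: the stated $\overline v(S)=v(N)-\sum_{i:\,S_i\nsubseteq S}v(S_i)$ subtracts too much. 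Take $N=\{1,2,3,4\}$, $S_1=\{1,2\}$, $S_2=\{3,4\}$, $v(S_1)=v(S_2)=1$, $v(N)=2$, and $S=\{1,3\}$: the extreme game with $d(\{1\})=d(\{3\})=1$ gives $w(S)=2$, while the printed formula yields $\overline v(S)=0$. The correct condition is $S_i\cap S=\emptyset$ (a block's mass escapes $S$ exactly when no nonempty $T\subseteq S_i\cap S$ exists), which also subsumes the $S\subseteq S_i$ case; the paper's own argument for the upper game is loose at precisely this point. So carry out your maximisation with the disjointness index set and you will have a proof that is both in the spirit of the paper's and correct where the paper's formula is not.
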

\begin{proof}
    Let $(N,\K,v)$ be an incomplete game with the properties above. For
    any $\P(v)$-extension $(N,w)$, from the fact that the coalitions in
    $\K \setminus \left\{N\right\}$ are disjoint, at least one
    subcoalition $T_i$ of each coalition $S_i \in \K \setminus
    \left\{N\right\}$ must have a nonzero dividend $d_w(T_i)$, otherwise $v(S_i) =
    0$. By Theorem~\ref{thm:vertex-positive}, there is at most one such subcoalition if we consider an extreme game. If there were two nonzero dividends $d_w(T_i^1)$, $d_w(T_i^2)$ for one $S_i$, then the corresponding set of negligible coalitions would not be maximal. Setting the dividend of $T_i^1$ to $d_{w}(T_i^1) + d_{w}(T_i^2)$ yields a set $E$, such that $E(w) \subsetneq E$. By this, for the extreme game, it holds $d_{w}(T_i) = v(S_i)$. We further see, since $v(N) = \sum_{T \subseteq N} d_{w}(T)$, that
    $v(N) \geq \sum_{S_i \in \K \setminus \left\{N\right\}}v(S_i)$ holds.
    If the inequality does not hold, then that there is no extreme game of $\P(v)$ and hence, since the set is bounded ($N \in \K$), it is not  $\P$-extendable. Now, if the inequality is strict, there has to be another nonzero
    dividend of a coalition $T_N \subseteq N$ such that $T_N \nsubseteq S_i$ for
    $S_i \in \K \setminus \left\{N\right\}$, otherwise $T_i, T_N$ are two
    distinct subsets of $S_i$ and $E(w)$ is not maximal.
    Again, since we are interested in extreme games, by
    Theorem~\ref{thm:vertex-positive}, there is only one such coalition $T_N$,
    resulting in $d_{w}(T_N) = v(N) - \sum_{S_i \in \K \setminus
        \left\{N\right\}}v(S_i)$. Any game parameterised by a collection $\mathcal{T} \coloneqq
    \left\{T_1,\dots,T_{k-1},T_N\right\}$ and expressed as $v^\mathcal{T}$ from
    the statement of the theorem is thus an extreme game of $\P(v)$.
    
    Now let us show that the game $v^\K$ is the lower game.
    For a coalition $S$ with no subcoalition contained in $\K$,
    $v^\K(S)=0=\underline{v}(S)$. For a coalition $S$ such that there is $T \in K$, $T \subseteq S$, the value of $w(S)$ of any $\P(v)$-extension cannot be smaller than the sum $\sum_{T: T \in \K, T \subseteq    S}v(T) = v^\K(S)$. And since $N \in \K$, $v^{\mathcal{T}}(N) = v(N) = \underline{v}(N)$.
    
    Finally, we show that each value of the upper game is achieved by a different
    extreme game. If $S$ is a proper subcoalition of $S_i$, the value $v(S_i)$
    is, thanks to the non-negativity of dividends, an upper bound for the value
    of $S$. For any extreme game $v^{\mathcal{T}}$ such that $S \in \mathcal{T}$,
    this bound is tight. If $S$ is not a subcoalition of any $S_i$, its value
    cannot exceed $v(N) - \sum_{T_i\in \mathcal{T}\setminus{T_N}}d_{w}(T_i)$,
    otherwise the characterisation of $\P$-extendability is not satisfied for the
    grand coalition $N$. By taking an extreme game with $T_N = S$, we see that
    this bound is tight.
\qed \end{proof}

\subsection{Set of known values $\K$ closed on subsets}
The second class of incomplete games satisfies that the set $\K \setminus \{N\}$ is closed on subsets, i.e.\ $S \in K, T \subseteq S \implies T \in\K$. The analysis of this case helps us further in the study of symmetric positive extensions ($\P_\sigma$-extensions).

\begin{theorem}\label{thm:case-down-closed}
    Let $(N,\K,v)$ be a $\P$-extendable incomplete game such that $N \in \K$ and for every $S\in \K \setminus \left\{N\right\}, T \subseteq S \implies T \in\K$. Furthermore, for $S \in \K$, let $\delta_S$ be defined as $\delta_{\{i\}} = v(\{i\})$ and $\delta_S = v(S) -
    \sum_{T \subsetneq S}\delta_T$. Then the extreme games $v^{C}$, the lower game $\underline{v}$, and the upper game $\overline{v}$ can be described as follows:
    \[
    v^{C}(S)\coloneqq\begin{cases}
        \delta_N + \sum_{T \in \K, T \subseteq S} \delta_T, & \text{if } C \subseteq S,\\
        \sum_{T \in \K, T \subseteq S} \delta_T, & \text{otherwise,}\\
    \end{cases}
    \]
    for $C \notin \K \setminus \left\{N\right\}$, and 
    \[
    \underline{v}(S) \coloneqq v^{N}(S) =\begin{cases}
        \delta_N + \sum_{T \in \K, T \subseteq S} \delta_T, & \text{if } S = N,\\
        \sum_{T \in \K, T \subseteq S} \delta_T, & \text{otherwise,}\\
    \end{cases}
    \]
    \[
    \overline{v}(S) \coloneqq \begin{cases}
        v(S), & \text{if } S \in \K,\\
        v^{S}(S), & \text{otherwise.}\\
    \end{cases}
    \]
    Furthermore, $(N,\K,v)$ is $\P$-extendable if and only if $\delta_S \geq 0$ for all $S \in \K$.
\end{theorem}
\begin{proof}
    Let $(N,w) \in \P(v)$. Thanks to the structure of $\K$, the dividends $d_w(S)$ for $S \in \K \setminus \left\{N\right\}$ are the same for any $(N,w) \P(v)$and they are equal to $\delta_S$. As a consequence, for any $S$ such that $\delta_S = 0$ it holds $S \in E(w)$ and this holds for any $\P(v)$-extension. Now if the uniquely defined value $\delta_N=v(N) - \sum_{S \in \mathcal{K} \setminus \left\{N\right\}}\delta_S > 0$, there has to be at least
    one $C \notin \K \setminus \left\{N\right\}$ such that its dividend
    $d_w(C)\neq0$. By Theorem~\ref{thm:vertex-positive}, following a similar    argument as in the proof of the previous theorem, $E(w)$ is maximal if and only if there is only one such $C$, otherwise if there are $C_1\neq C_2$ such that $d_w(C_1)\neq 0$ and $d_w(C_2)\neq 0$, by taking $(N,x) \in \P(v)$ such that $d_x(C_1)=0$, $d_x(C_2)=d_w(C_1) + d_w(C_2)$ we arrive into contradiction with maximality, since $E(w) \subsetneq E(x)$. Thus choosing $(N,w)\in \P(v)$, such that $d_w(C)=\delta_n$ yields an extreme game $v^C$ of
    $\P(v)$ for any $C \notin \K \setminus \{N\}$.
    
    For any coalition $S$, its value in any $\P(v)$-extension has to be larger
    or equal to $\sum_{T \in \K, T \subseteq S}\delta_S$. Notice that
    $v^{N}(S)$ is equal to this number for any $S$, thus being the lower game.
    
    For any coalition $S$, its maximal value is either $v(S)$ if $S \in \mathcal
    K$, or at most $v(N) - \sum_{T \in \K \setminus \left\{N\right\}: T
        \nsubseteq S}\delta_T = \delta_N + \sum_{T \in \K, T \subseteq S}\delta_T$, which is
    equal to $v^S(S)$ and thus it is the upper game.
\qed \end{proof}

For both studied classes of incomplete games, it holds $\underline{v}(S)\in\P(v)$. Also notice that the number of extreme games $v^\mathcal{C}$ equals the number of coalitions $C$ such that $C \notin K \setminus \{N\}$, that is $2^n-\lvert K \rvert + 1$ if $v(N) - \sum_{S \in \K \setminus \left\{N\right\}}\delta_S > 0$, otherwise $\P(v)$ contains precisely one game (in case $v(N) - \sum_{S \in \K \setminus \left\{N\right\}}\delta_S = 0$) or no game at all (if $v(N) - \sum_{S \in \K \setminus \left\{N\right\}}\delta_S < 0$).

\subsection{Symmetric positive extensions}

We denote the set of symmetric positive extensions of $(N,\K,v)$ by $\P_\sigma(v)$. Analogously to study of $\Co_\sigma(v)$, we make use of the \textit{reduced forms} $(N,s)$ and $(N,\mathcal{X},\sigma)$ of games $(N,v)$ and $(N,\K, v)$, respectively, which are defined in Definition~\ref{def:reduced-forms}. We can easily obtain the following result as a corollary of Theorem~\ref{thm:case-down-closed}.

\begin{theorem}
    Let $(N,\mathcal{X}, s)$ be the reduced form of a symmetric incomplete game such that $n \in\mathcal{X}$ and $ i \in N, i \leq k \implies i \in \mathcal{X}$. Then the lower game and the upper game of $\P_\sigma(v)$ can be described as
    \[
    \underline{s}(i) \coloneqq \begin{cases}
        s(i), & \text{for } i \in \mathcal{X},\\
        s(k), & \text{otherwise,}\\
    \end{cases}
    \text{\quad and\quad}
    \overline{s}(i) \coloneqq \begin{cases}
        s(i), & \text{for } i \in \K,\\
        s(n), & \text{otherwise.} 
    \end{cases}     
    \]
\end{theorem}

The following game illustrates that even in the symmetric scenario, there is $(N,\mathcal{X},\sigma)$ such that $(N,\underline{s}) \not\in\P_\sigma(v)$.

\begin{example}\emph{(The lower game is not necessarily a $P^4_\sigma$-extension)}
    Let $(N,\mathcal X,\sigma)$ be the reduced form of a symmetric $4$-person incomplete game such that $\mathcal X = \left\{2,4\right\}$. From the properties of symmetric positive games we know that any $(N,s) \in P^4_\sigma(v)$ is given by 4 non-negative dividends with corresponding values $d_1,d_2,d_3,d_4$ such that
    \begin{itemize}
        \item $s(1) = d_1$,
        \item $s(2) = 2d_1 + d_2$,
        \item $s(3) = d_3 + 3d_2 + 3d_1$,
        \item $s(4) = d_4 + 4d_3 + 6d_2 + 4d_1$.
    \end{itemize}
    By setting $d_1 \coloneqq 0, d_2 \coloneqq \sigma(2),d_3 \coloneqq 0$, and $d_4 \coloneqq \sigma(4) - 6d_2$ we
    get a $P^4_\sigma$-extension where $s(1) = 0$ (clearly the minimum) and it is achieved if and only if $d_1=0$. Setting $d_1=0$ yields $s(3) = 3\sigma(2)$. However, to minimize
    $s(3)$, we can choose $d_1 \coloneqq \frac{\sigma(2)}{2}$, $d_2 \coloneqq 0$, $d_3 \coloneqq 0$, and
    $d_4 \coloneqq \sigma(4) - 4d_1$, obtaining $s(3) = 3d_1 = \frac{3}{2}\sigma(2)$. We cannot minimize both values simultaneously and thus $(N,\underline{s})\notin \P_\sigma(v)$.
\end{example}

It is not difficult to generalise this example for symmetric $n$-person games. For similar reasons, even the lower game of (non-symmetric) $\P$-extensions of non-symmetric incomplete games is not
contained in $\P(v)$. This is contrary to what we showed for the classes of incomplete games in Theorem~\ref{thm:empty-int} and~\ref{thm:case-down-closed}.

\section{Convex extensions} \label{sec:convexity}

For non-negative incomplete games with minimal information, sets of $S^n$-ex\-ten\-si\-ons and $\P$-extensions are described in~\cite{Masuya2016a}. For the sake of completeness, in this section we derive similar results for the set of $\Co$-extensions.
\begin{theorem}
    Let $(N,\K,v)$ be a non-negative incomplete game with minimal information. It is $\Co$-extendable if and only if $\Delta \geq 0$.
\end{theorem}
\begin{proof}
    If $\Delta \geq 0$, it immediately follows that game $(N,w^*)$ defined using its dividends as 
    \[
    d_{w^*}(S) \coloneqq \begin{cases}
        v(i) & \textit{if } S = \{i\},\\
        \Delta & \textit{if } S = N,\\
        0 & \textit{otherwise},\\
    \end{cases}
    \]
    is $\Co$-extension. If $\Delta < 0$, it follows $v(N)<\sum_{i \in N}v(i)$, thus any extension of $(N,\K,v)$ cannot be convex.
\qed \end{proof}

In~\cite{Masuya2016a}, they showed the lower and the upper game of $\P$-extensions coincide with those of $S^n$-extensions, thus they must coincide with the lower and the upper game of $\Co$-extensions as well (see~\eqref{eq:min-info-lower-upper-games}). Finally, we derive a description of the set of $\Co$-extensions. We employ $N_1\coloneqq \{T \subseteq N \mid \lvert T \rvert > 1\}$.
\begin{theorem}\label{thm:convex-extensions-min-info}
    Let $(N,\K,v)$ be a non-negative incomplete game with minimal information, and let $(N,v^T)$ for $T \in N_1$ be games from (\ref{eq:def-vt}). The set of $\Co$-ex\-ten\-si\-on can be expressed as
    \begin{equation}\label{eq:convex-extensions}
        \Cv=\left\{\sum_{T \in N_1}\alpha_Tv^T \mid \sum_{T \in N_1}\alpha_T=1, \forall S_1,S_2 \subseteq N: \sum_{T \in E(S_1,S_2)}\alpha_{T} \geq 0\right\},
    \end{equation}
    where $E(S_1,S_2)\coloneqq  \{T \subseteq S_1 \cup S_2 \,|\, T \nsubseteq S_1 \text{ and } T \nsubseteq S_2\}$.
\end{theorem}
\begin{proof}
    The proof follows from the proof of Theorem 6 in~\cite{Masuya2016a}. The only difference is in the condition for coefficients $\alpha_T$. For the description of the set of $S^n$-extensions, a condition $\sum_{T \in E(S_1,S_2)}\alpha_{T} \geq 0$ for every pair of conditions $S_1\cap S_2 =\emptyset$ is enforced. This condition corresponds to the fact that for $S_1,S_2 \subseteq N$ such that $S_1 \cap S_2 = \emptyset$, it holds $v(S_1) + v(S_2) \leq v(S_1 \cup S_2)$. In terms of Harsanyi dividends, it is equivalent to $\sum_{T \in E(S_1,S_2)}\delta_v(T) \geq 0$. For convex games and $S_1,S_2 \subseteq N$ (not necessarily disjoint coalitions), the conditions $v(S_1) + v(S_2) \leq v(S_1 \cap S_2) + v(S_1 \cup S_2)$ can be equivalently expressed in terms of Harsanyi dividends as
    \[
    \sum_{T \subseteq S_1 \cup S_2, T \nsubseteq S_1, T \nsubseteq S_2}\alpha_T \geq 0.
    \]
    Notice that coalitions $T$ are exactly those from the set $E(S_1,S_2)$.
\qed \end{proof}

In our attempt, to derive similar results for a more general setting, we surveyed existing results regarding submodular set functions (recall a set function $v \colon 2^N \to \mathbb{R}$ is submodular if and only if $-v$ is supermodular).

The study of extendability of submodular functions initiated Seshadhri
and Von\-dr{\'a}k in~\cite{Seshadhri2014}. They introduced
\textit{path certificate}, a combinatorial structure whose existence certifies
that a submodular function is not extendable. They also showed an example of a
partial function defined on almost all coalitions that is not extendable, but
by removing a value for any coalition, the game becomes extendable. Later in
2018, Bhaskar and Kumar~\cite{Bhaskar2018} studied extendability of several classes of set functions, including submodular functions. Inspired by the results of Seshadhri and Vondr{\'a}k, they introduced a more natural combinatorial certificate of non-extendability --- \textit{square certificate}. Using this concept, they were able to show that a submodular function is extendable on the entire domain if and only if it is extendable on the lattice closure of the sets with defined values. The \emph{lattice closure} $LC(\K)$ of a set of points $\K \subseteq 2^N$ in a partially ordered set $(2^N,\subseteq)$ is the inclusion-minimal subset of $2^N$ that contains $\K$ and that is closed under the operation of union and intersection of sets. Following is a modification of Theorem 7 from~\cite{Bhaskar2018}.
\begin{theorem}\cite{Bhaskar2018} \label{thm:BK2018}
    Let $(N,\K,v)$ be an incomplete cooperative game and 
    \[\mathcal{F} \coloneqq LC(\K) \cap \{S \subseteq N \mid \underline{S},\overline{S} \in\K \text{ s.t. } \underline{S} \subseteq S \subseteq \overline{S}\}.\] Incomplete game $(N,\K,v)$ is $\Co$-extendable if and only if there is supermodular $w\colon 2^\mathcal{F} \to \mathbb{R}$ such that $w(S)=v(S)$ for $S \in \K$.
\end{theorem}

In 2019, the same authors showed that the problem of
extendability for a subclass of submodular functions, so called
\textit{coverage functions} (see~\cite{Bhaskar2019}.), is NP-complete. Thus, the question of $\Co$-extendability is in general NP-complete as well.

The rest of questions concerning $\Co$-extensions of general incomplete games remain open problems. From now on, we focus on extensions that are both convex and symmetric. The reasons are twofold. First, symmetry yields a simpler analysis of the set of $\Co$-extensions. Second, symmetric $\Co$-extensions form an important subset of $\Co$-extensions and may be considered as an approximation of the set.

\subsection{Symmetric convex extensions}

Since the set of convex extensions seems to be quite hard to describe in its
full generality, we focus on a subset of $\Co$-extensions that are symmetric and denote this set by $\Co_\sigma$. The additional property of symmetry yields compact (and by our opinion elegant) descriptions of the set of $\Co_\sigma$-extensions and since $\Co_\sigma(v) \subseteq \Co(v)$ for symmetric incomplete games, the set of symmetric convex extensions may be regarded as an approximation of set $\Co(v)$.

The main ingredient for our results is the following cha\-ra\-cte\-ri\-sa\-ti\-on of
symmetric convex games. For completeness, the proof of this folklore result is provided in appendix.

\begin{proposition}\label{prop:symconvchar}
    Let $(N,v)$ be a symmetric cooperative game. Then for every $S \subsetneq N \setminus j$ and $i \in S$, it holds that
    \begin{equation}\label{eq:convex-symmetric}
        v(S) \leq \frac{v(S\setminus i) + v(S\cup j)}{2}
    \end{equation}
    if and only if the game is convex.
\end{proposition}

We note that the characterisation from Proposition~\ref{prop:symconvchar} does not hold for general convex games. This can be seen in the following example.

\begin{example}\label{example}
    \emph{(A convex game not satisfying conditions from Proposition~\ref{prop:symconvchar})}\\
    The game $(N,v)$ given in Table~\ref{tab:example} is convex, as can be easily checked. However, the inequality 
    \[
    v(\{1,3\}) \leq \frac{v(\{1\}) + v(\{1,2,3\})}{2}
    \] 
    is not satisfied, as $6 \nleq \frac{1 + 9}{2}$.
    \begin{table}\label{tab:example}
        \begin{center}
            \begin{tabular}{ |c|c|c|c|c|c|c|c| } 
                \hline
                $S$& $\{1\}$ & $\{2\}$ & $\{3\}$ & $\{1,2\}$ & $\{1,3\}$ & $\{2,3\}$ & $\{1,2,3\}$ \\ 
                \hline
                $v(S)$ & $1$ & $1$ & $1$ & $4$ & $6$ & $4$ & $9$ \\ 
                \hline
            \end{tabular}
            \caption{The game $(N,v)$ from Example~\ref{example} with its characteristic function given in the table.}
        \end{center}
    \end{table}
\end{example}   

For symmetric games, we can denote by $s(k)$ the value of $v(S)$ of any $S
\subseteq N$ such that $\lvert S \rvert = k$. This allows us to
formulate the following characterisation of symmetric convex games.

\begin{theorem}\label{thm:symconv}
    A game $(N,v)$ is symmetric convex if and only if for all $k \in \{1, \dots, n-1\},$
    \begin{equation}\label{eq:convex-sym-char}
        s(k)\leq \frac{s(k-1)+s(k+1)}{2}.
    \end{equation}
\end{theorem}

Hence we can associate every symmetric convex game $(N,v)$ with a function
$s\colon  \left\{0,\dots, n\right\} \to \mathbb{R}$ having the above property.
Similarly, we can apply this to $(N,\K,v)$ with a function
$\sigma\colon \mathcal X \to \mathbb{R}$ where $\mathcal X \subseteq
\{0,\dots,n\}$ is constructed from $\K$. To formalise these constructions, we
define reduced forms of games $(N,v)$ and $(N,\K,v)$.

\begin{definition}\label{def:reduced-forms}
    Let $(N,v)$ be a symmetric game and $(N,\K,v)$ a symmetric
    incomplete game.
    \begin{itemize}
        \item The \emph{reduced form of a game} $(N,v)$ is an ordered pair $(N,s)$, where the function $s\colon \left\{0,\dots,n\right\} \to \mathbb{R}$ is a reduced
        characteristic function such that $s(k) \coloneqq v(S)$ for any $S \subseteq N$ with
        $\lvert S \rvert = k$.
        \item The \emph{reduced form of an incomplete game} $(N,\K,v)$ is a tuple $(N,\mathcal X,\sigma)$ where $\mathcal X = \{i |\, i \in \left\{0,\dots,n\right\}, \exists S \in
        \K: \lvert S \rvert = i\}$ and the function $\sigma\colon \mathcal X \to
        \mathbb{R}$ is defined as $\sigma(k) \coloneqq v(S)$ for any $S \in \K$ such that
        $\lvert S \rvert = k$.
    \end{itemize}
\end{definition}
We also call $(N,s)$ and $(N, \mathcal X, \sigma)$ \emph{the reduced game} and
\emph{the reduced incomplete game}, respectively.

Since $\emptyset$ always belongs to $\K$, for every reduced incomplete game $(N,\mathcal X, \sigma)$, it also holds that $0 \in \mathcal X$ and $\sigma(0) = 0$. When we consider a reduced game $(N,s)$ of a $\Co_\sigma(v)$-extension, we often denote this, for brevity, by $(N,s) \in \Co_\sigma(v)$. By $\overline{X}$, we denote the complement of $\mathcal{X}$ in $\{0,\dots,n\}$, i.e.\ $\overline{\mathcal{X}} \coloneqq \{0,\dots,n\}\setminus \mathcal{X}$.

Notice that a game $(N,v)$ is symmetric convex if and only if the function $s$
of its reduced form $(N,s)$ satisfies property \eqref{eq:convex-sym-char} from
Theorem~\ref{thm:symconv}.

We can visualize the reduced form $(N,s)$ of a symmetric convex game $(N,v)$
by a graph in $\mathbb{R}^2$. On the $x$-axis we put the coalition sizes and
on the $y$-axis the values of~$s$. The point $(0,0)$ is fixed for all reduced games. Now by Theorem~\ref{thm:symconv}, the conditions for $k\in\{1,\dots, n-1\}$ enforce that for $i \in \left\{0,\dots,n\right\}$, points $(i,s(i))$ lie in a \textit{convex} position. More precisely, if we connect the neighbouring pairs $(i,s(i)), (i+1,s(i+1))$ (where $i \in \{0,\ldots,n-1\}$)
by line segments, we obtain a graph of a convex function. The graph is illustrated with an example in Figure~\ref{fig:sym_convex_possibilities}.
Further in this text, we refer to this function as the \emph{line chart} of $(N,s)$. Similarly, for $(N,\mathcal{X},\sigma)$, the line chart is obtained by
connecting consecutive elements from $\mathcal{X}$ by line segments. If $n \in \overline{\mathcal{X}}$, the rightmost line segment is extended to end at $x$-coordinate $n$. The values of $s$ are then set to lie on the union of these line segments.

\begin{figure}
    \centering
    \includegraphics[scale=0.6]{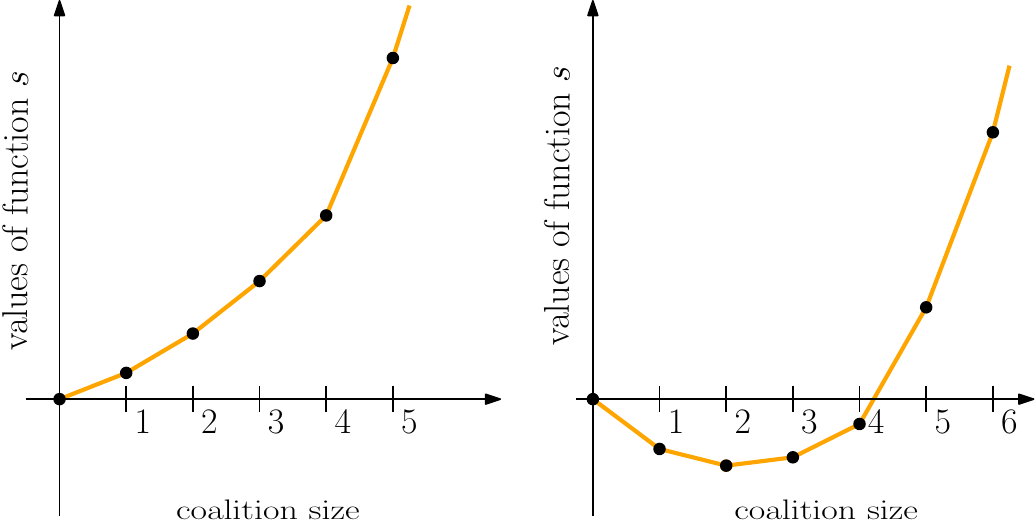}
    \caption{Examples of line charts of symmetric convex games in their reduced forms. The figure on the left depicts a game $(N,s)$ where $s(1) > 0$, the graph on the right a situation where $s(1) < 0$. The slopes of the line segments are bounded by convexity of the function.}
    \label{fig:sym_convex_possibilities}
\end{figure}

\subsubsection{$\Co_\sigma$-extendability}

For an incomplete game in reduced form, i.e.\ $(N,\mathcal X, \sigma)$, the first
question that arises is that of $\Co_\sigma$-extendability. For $ \mathcal X = \left\{0,i\right\}$ with $i \in \left\{1,\dots,n\right\}$, the game is always $\Co_\sigma$-extendable (a possible $\Co_\sigma$-extension is the one where the values of each coalition size lie on the line coming through $(0,\sigma(0))$ and $(i,\sigma(i))$). Therefore, in
the following theorem, we consider $\lvert \mathcal X \rvert > 2$.

\begin{theorem}\label{thm:convex-sym-extendability}
    Let $(N,\mathcal X,\sigma)$ be a reduced form of a symmetric incomplete
    game $(N,\K,v)$ where $\lvert \mathcal X \rvert > 2$. The game is $\Co_\sigma$-extendable if and only if
    \[
    \sigma(k_2) \leq \sigma(k_1) + (k_2 - k_1) \frac{\sigma(k_3) - \sigma(k_1)}{k_3-k_1},
    \]
    for all consecutive elements $k_1 < k_2 < k_3$ from $\mathcal X$.
\end{theorem}

\begin{proof}
    If the game is $\Co_\sigma$-extendable, let $(N,s)$ be the reduced form of any of its $\Co_\sigma$-extension. By Theorem~\ref{thm:symconv}, the line chart of $(N,s)$ is a convex function that coincides with~$\sigma$ on the values of $\mathcal{X}$. Therefore, for any consecutive elements $k_1,k_2,k_3$ from $\mathcal{X}$, the inequality must hold.
    
    For the opposite implication, we construct a $\Co_\sigma(v)$-extension by setting the values of~$s$ to lie on the line chart of $(N,
    \mathcal X, \sigma)$. The construction is illustrated in
    Figure~\ref{fig:sym_conv_segments}.
    
    \begin{figure}
        \centering
        \includegraphics[scale=0.6]{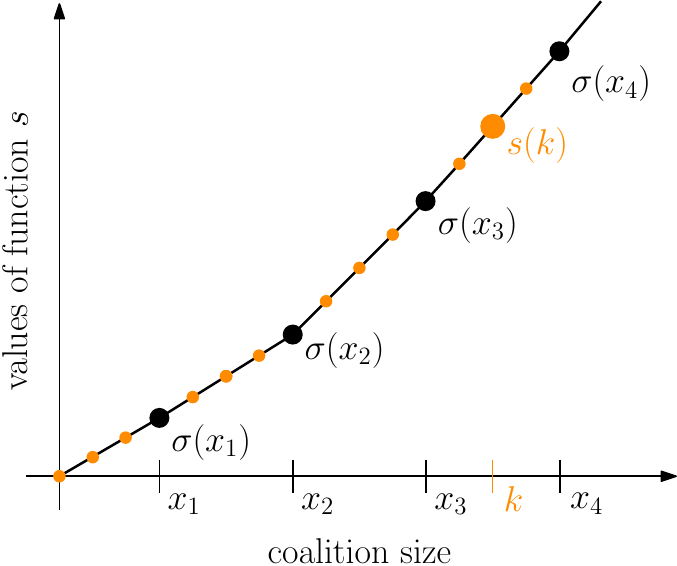}
        \caption{The construction of a $\Co_\sigma$-extension of $(N,\mathcal{X},\sigma)$ where $\mathcal X = \{x_1,x_2,x_3,x_4\}$, using the line chart of $(N,\mathcal X,\sigma)$. The value $s(k)$ lies on the line segment connecting $(x_3,\sigma(x_3))$ and $(x_4,\sigma(x_4))$.}
        \label{fig:sym_conv_segments}
    \end{figure}
    
    Notice that $s(k) = \sigma(k)$ for $k \in \mathcal X$ and also, because the inequalities for consecutive elements $k_1,k_2,k_3$ from $\mathcal X$ hold, the line chart represents a convex function. Thus for all $k \in \{1,\dots,n-1\}$, it holds 
    \[
    s(k)\leq \frac{s(k-1)+s(k+1)}{2}
    \]
    and by Theorem~\ref{thm:symconv}, the game $(N,s)$ is in $\Co_\sigma(v)$.
    \qed \end{proof}

As a direct consequence of the previous theorem, the problem of
$\Co_\sigma$-exten\-da\-bi\-li\-ty of symmetric incomplete games can be decided in linear time with respect to the size of the original
game (i.e.\ the size of the characteristic function).

\subsubsection{The lower game and the upper game}
The following proposition addresses the boundedness of the set of $\Co_\sigma$-ex\-ten\-sions. The restriction to $\lvert N \rvert \geq 3$ is without loss of generality, because for $\lvert N \rvert \leq 2$, when the game $(N,\X,\sigma)$ is not complete and is $\Co_\sigma$-extendable, the set of $\Co_\sigma$-extensions is always unbounded.

\begin{proposition}
    Let $(N,\mathcal{X},\sigma)$ be the reduced form of a $\Co_\sigma$-extendable symmetric incomplete game $(N,\K, v)$ with $\lvert N \rvert \geq 3$. The $\Co_\sigma(v)$ is bounded if and only if $\lvert\mathcal{X}\rvert \geq 3$ and $n \in \mathcal{X}$.
\end{proposition}
\begin{proof}
    Let $(N,\mathcal{X},\sigma)$ be the reduced form of a $\Co_\sigma$-extendable incomplete
    game. If $n \in \overline{\mathcal{X}}$, clearly, from Theorem~\ref{thm:symconv} there
    is no upper bound on the profit of $n$. Let $n \in \mathcal X$ and suppose for a contradiction
    that there is $k \in N$ such that there is no upper bound on its profit.
    Choose a $\Co_\sigma(v)$-extension $(N,s)$ such that $s(k) > k\frac{\sigma(n)}{n}$. The line chart of $(N,s)$ is not a convex function (the property is violated for $(0,s(0)),(k,s(k)),(n,s(n))$), therefore $(N,s)\not\in\Co_\sigma(v)$.
    
    If $\lvert \mathcal{X} \rvert \leq 2$, then $\mathcal{X} = \{0,n\}$ (otherwise
    the set of $\Co_\sigma$-extensions is not bounded from above). Let $\ell$ be a negative
    value smaller than or equal to $\sigma(n)$. Any game $(N,s_{\ell})$ with $s_{\ell}(k)
    = \ell$ for $k \in \{1,\dots,n-1\}$ and $s_{\ell}(0) = \sigma(0), s_{\ell}(n) = \sigma(n)$ is a $\Co_\sigma(v)$-extension of $(N,\X,\sigma)$. Thus, there is no lower bound
    on values of $1,\dots,n-1$.
    
    If $\lvert \mathcal{X} \rvert \geq 3$, then let $i \in
    \mathcal X \setminus \left\{0,n\right\}$. For $k\in \{1,\dots,i-1\}$, the point $(k,s(k))$
    must lie on or above the line coming through points
    $(i,\sigma(i)),(n,\sigma(n))$, otherwise the convexity of line chart of $(N,s)$ is violated, leading to a contradiction. Similarly, for any
    $k\in \{i+1,\dots,n-1\}$ the value $s(k)$ must lie on or above the line coming through points $(0,\sigma(0)),(i,\sigma(i))$, otherwise the convexity is violated, again. The profit of every $k$ is therefore bounded from below.
    \qed \end{proof}

\begin{theorem}\label{thm:convex-sym-lower-upper-game}
    Let $(N,\mathcal X,\sigma)$ be the reduced form of a $\Co_\sigma$-extendable symmetric incomplete game. Suppose that $\Co_\sigma(v)$ is bounded. Furthermore, for every $k \in \overline{\mathcal{X}}$, denote by
    $i_1,i_2,j_1,j_2$ the closest distinct elements from $\mathcal X$ such that it holds\\
    $i_1 < i_2 < k < j_1 < j_2$, if they exist. Then the lower game has the
    following form:
    \[
    \underline{s}(k) \coloneqq \begin{cases}
        \sigma(k), & \text{if } k \in \K,\\
        \sigma(i_1) + (k-i_1)\frac{\sigma(i_2) - \sigma(i_1)}{i_2-i_1}, & \text{if } k\not\in\K \text{ and } j_2 \text{ does not exist,}\\
        \sigma(j_1) + (k-j_1)\frac{\sigma(j_2) - \sigma(j_1)}{j_2-j_1}, & \text{if } k\not\in\K \text{ and } i_1 \text{ does not exist,}\\
        \max\begin{cases*}
            \begin{rcases*}
                \sigma(i_1)+ (k-i_1)\frac{\sigma(i_2) - \sigma(i_1)}{i_2-i_1},\\
                \sigma(j_1) + (k-j_1)\frac{\sigma(j_2) - \sigma(j_1)}{j_2-j_1}\\
            \end{rcases*}
        \end{cases*}, & \text{if } k\not\in\K \text{ and } i_1,i_2,j_1,j_2 \text{ exist.}\\
    \end{cases}
    \]
    The upper game has the following form:
    \[
    \overline{s}(k) \coloneqq \begin{cases}
        \sigma(k), & \text{if } k \in \mathcal{X},\\
        \sigma(i_2) + (k-i_2)\frac{\sigma(j_1) - \sigma(i_2)}{j_1-i_2}, &\text{otherwise.}\\ 
    \end{cases}
    \]
\end{theorem}
\begin{proof}   
    To prove that $(N,\underline{s})$ is the lower game, we start by showing that
    for every $\Co_\sigma$-extension $(N,w)$ and every coalition size $k \in
    N$, it holds that $\underline{s}(k) \leq w(k)$. If $k \in \mathcal{X}$,
    trivially $\underline{s}(k) = \sigma(k) = w(k)$. If $k \notin \mathcal{X}$, then since any $\Co_\sigma$-extension
    must have a convex line chart, the value $w(k)$ must lie on or above the
    lines coming through pairs of points $(i_1,\sigma(i_1)),(i_2,\sigma(i_2))$
    and $(j_1,\sigma(j_1)),(j_2,\sigma(j_2))$. The three cases in the
    definition of the lower game capture this fact by setting the value of
    $\underline{s}(k)$ so that it lies on either one of the lines (if the other
    one does not exist) or on the maximum of both of them.
    
    Now it remains to show that for every $k \in N$, the value $\underline{s}(k)$
    is attained for at least one $\Co_\sigma$-extension. We introduce a
    $\Co_\sigma$-extension $(N,s^{\{a,b\}})$ for consecutive $a,b \in \mathcal{X}$ such that $a < b$, described as
    \[
    s^{\{a,b\}}(\ell) \coloneqq \begin{cases}
        \sigma(\ell), & \text{if } \ell \in \mathcal{X},\\
        \underline{s}(\ell), & \text{if } \ell \notin \mathcal{X} \text{ and } a < \ell < b,\\
        \overline{s}(\ell), & \text{if } \ell \notin \mathcal{X} \text{ and either } \ell < a \text{, or } b < \ell.\\
    \end{cases}
    \] 
    Clearly the game is an extension of $(N,\mathcal{X},\sigma)$. For $i \in \{2,\dots,n-1\}$ such that all three values $s^{\{a,b\}}(i-1),s^{\{a,b\}}(i),s^{\{a,b\}}(i+1)$ coincide with the respective values of the upper game $\overline{s}$, it holds $s^{\{a,b\}}(i) \leq \frac{s^{\{a,b\}}(i-1) +
        s^{\{a,b\}}(i+1)}{2}$, because $(N,\overline{s})$ is a symmetric convex game (as we show further in this proof) so by Theorem~\ref{thm:symconv}, the same inequality holds for values of
    $\overline{s}$. In the rest of the cases, either all the three points
    $(i-1,s^{\{a,b\}}(i-1)),(i,s^{\{a,b\}}(i)),(i+1,s^{\{a,b\}}(i+1))$ lie
    on the same line and the inequality holds with the equal sign, or the
    three points lie on the maximum of two lines coming through pairs of points $(a_2,\sigma(a_2)),(a,\sigma(a))$ and $(b,\sigma(b)),(b_2,\sigma(b_2))$ where $a_2 < a$ and $b < b_2$ are consecutive pairs from $\mathcal{X}$. If $s^{\{a,b\}}(i) >
    \frac{s^{\{a,b\}}(i-1) + s^{\{a,b\}}(i+1)}{2}$, then either $\sigma(a) >
    \sigma(a_2) + (a - a_2) \frac{\sigma(b) - \sigma(a_2)}{b-a_2}$ or
    $\sigma(b) > \sigma(a) + (b - a) \frac{\sigma(b_2) - \sigma(a)}{b_2- a}$, both resulting, by Theorem~\ref{thm:convex-sym-extendability}, in a contradiction with the $\Co_\sigma$-ex\-ten\-da\-bi\-li\-ty of
    $(N,\mathcal{X},\sigma)$. Now for $k \in \mathcal{X}$, we choose $(N,s^{\{a,b\}})$ such that $a = k$ and for $k \notin \mathcal X$, we choose $(N,s^{\{a,b\}})$ such that $ a < k < b$ are the closest coalition sizes with defined value.
    
    For the upper game $(N,\overline{s})$, suppose for a contradiction that there
    is the reduced form $(N,s)$ of a $\Co_\sigma(v)$-extension such that for $k \in N$, $\overline{s}(k) < s(k)$. As for $k \in \mathcal{X}$, $\overline{s}(k) = \sigma(k) = s(k)$, it must be that $k \notin \mathcal{X}$. But if $k \notin \mathcal{X}$ and $\overline{s}(k) = \sigma(i_2) + (k-i_2)\frac{\sigma(j_1) - \sigma(i_2)}{j_1-i_2} < s(k)$, the convexity of the line chart is violated, because $(k,s(k))$ lies above the line segment between points $(i_2,\sigma(i_2))$,$(j_1,\sigma(j_1))$. This is a contradiction.
    
    Now we prove that $(N,\overline{s})$ is a $\Co_\sigma$-extension of
    $(N,\mathcal X,\sigma)$. First, it is clearly an extension. Furthermore, notice that the values of $(N,\overline{s})$ lie on the line chart of $(N,\mathcal{X},\sigma)$. Since the game is $\Co_\sigma$-extendable, the line chart is a convex function, therefore inequalities \eqref{eq:convex-sym-char} from Theorem~\ref{thm:symconv} hold, meaning $(N,\overline{s})\in \Co_\sigma(v)$.
    \qed \end{proof}

The game $(N,\overline{s})$ is always a $\Co_\sigma$-extension, however, this is not true for $(N,\underline{s})$ in general, as can be seen in the example in Figure~\ref{fig:sym_conv_lower}.
\begin{figure}
    \centering
    \includegraphics[scale=0.6]{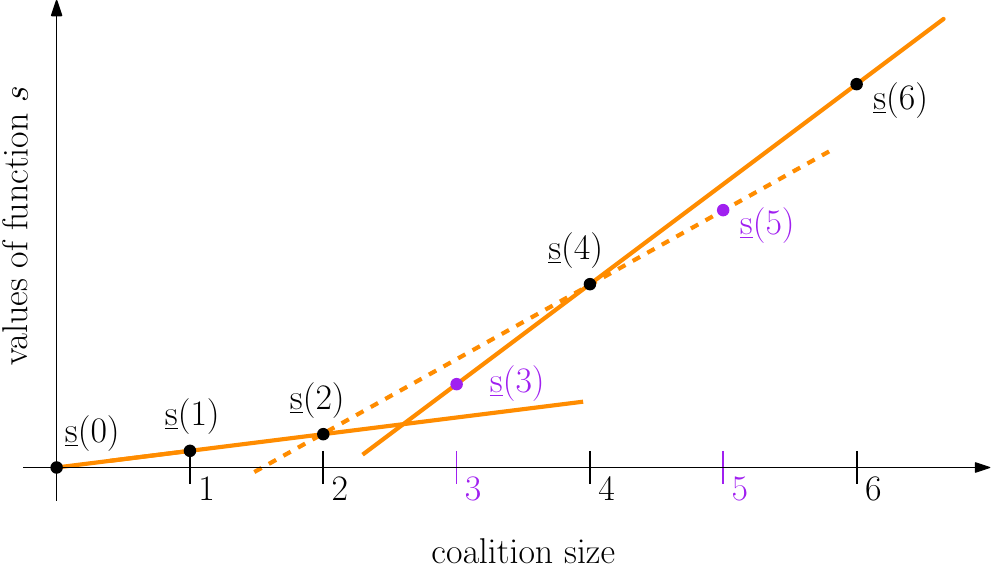}
    \caption{An example of a reduced game $(N,\mathcal X,\sigma)$ with $\mathcal X = \left\{0,1,2,4,6\right\}$ where the condition $\frac{\underline{s}(3) + \underline{s}(5)}{2} \ngeq \underline{s}(4)$ from Theorem~\ref{thm:symconv} is not satisfied. This implies that $(N,\underline{s})$ is not a $\Co_\sigma$-extension of $(N,\X,\sigma)$.}
    \label{fig:sym_conv_lower}
\end{figure}
\subsubsection{Extreme games}

Games $(N,s^{\{a,b\}})$ are actually even more important because they are extreme games of $\Co_\sigma(v)$. 

\begin{proposition}
    Let $(N,\mathcal{X},\sigma)$ be the reduced form of a $\Co_\sigma$-extendable symmetric incomplete game $(N,\K,v)$. Games $(N,s^{\left\{a,b\right\}})$ for consecutive $a,b \in \mathcal{X}$, where $a < b$, and $(N,\overline{s})$, are extreme games of $\Co_\sigma(v)$.
\end{proposition}

\begin{proof}
    For a contradiction, suppose that for some $a,b$, $(N,s^{\{a,b\}})$ is not an extreme game of $\Co_\sigma(v)$. By Definition~\ref{def:extreme-points}, there are two $\Co_\sigma(v)$-extensions $(N,s_1)$ and $(N,s_2)$ such that $(N,s^{\{a,b\}})$ is their nontrivial convex combination and without loss of
    generality, there is $i \in \{0,\dots,n\}$ such that $s_1(i) < s^{\{a,b\}}(i)
    < s_2(i)$. For $i \in \mathcal{X}$, this is not possible as $s_1(i) =
    s_2(i) = s^{\{a,b\}}(i)$. Furthermore, for $i \notin \mathcal{X}$ and $a < i < b$,
    this is a contradiction with $s_1(i) < s^{\{a,b\}}(i) = \underline{s}(i)$ and
    finally for $i \notin \mathcal{X}$ and either $i < a$ or $b < i$, we get again a
    contradiction because $\overline{s}(i) = s^{\{a,b\}}(i) < s_2(i)$. Following
    a similar argument, we conclude that the upper game $(N,\overline{s})$ is also an extreme
    game.
    \qed \end{proof}

In general, $(N,\overline{s})$ and $(N,s^{\{a,b\}})$ are not the only extreme games. In the following theorem, we describe all the extreme games of $\Co_\sigma(v)$.

\begin{theorem}
    Let $(N,\mathcal{X},\sigma)$ be the reduced form of a $\Co_\sigma$-extendable symmetric incomplete game such that $\Co_\sigma(v)$ is bounded. For $k \in \left\{0,\dots,n\right\} \setminus \mathcal{X}$ and $i,j \in \mathcal{X}$ closest to $k$ such that $i < k < j$, games $(N,s^k)$ defined as
    \[
    s^k(m) \coloneqq \begin{cases}
        \sigma(m), & \text{if } m \in \mathcal{X},\\
        \overline{s}(m), & \text{if } m \notin \mathcal{X} \text{ and either } m < i \text{ or } j < m,\\
        \underline{s}(m), & \text{if } m = k,\\
        \sigma(j) + (m - j)\frac{\sigma(j)-\underline{s}(k)}{j - k}, & \text{if } m \notin \mathcal{X} \text{ and } k < m < j,\\
        \sigma(i) + (m - i)\frac{\underline{s}(k) - \sigma(i)}{k - i}, & \text{if } m \notin \mathcal{X} \text{ and } i < m < k\\
    \end{cases}
    \]
    together with $(N,\overline{s})$ form all the extreme games of $\Co_\sigma(v)$.
\end{theorem}
\begin{proof}
    We divide the proof into two parts. In the first part, we show that any $\Co_\sigma(v)$-extension $(N,s)$ is a convex combination of games $(N,\overline{s})$ and $(N,s^k)$ for $k \in \overline{\mathcal{X}}$. In the second part, we show that every game $(N,s^k)$ is an extreme game, thus (together with the upper game $(N,\overline{s})$) they form all the extreme games.
    
    Before we begin, let us define a \emph{gap} as an inclusion-wise maximal nonempty sequence of consecutive coalition sizes with undefined profit. In other words, we can say that there is a gap between $i$ and $j$ if $i,j \in \mathcal X$, $i < j$, $j-i > 1$, and for every $i'$ such that $i < i' < j$, it holds $i' \in \overline{\mathcal{X}}$. The \emph{size} of the gap between $i$ and $j$ is defined as $j-i-1$, that is the number of coalition sizes with unknown values in the given gap. It is immediate that the size of every gap is at least one.
    
    We now prove the first part of the theorem. First, let us suppose that there is only one gap in $(N,\X,\sigma)$. We prove this case by induction on the size of the gap.
    
    If the size of the gap is 1, there is only one game $(N,s^k)$ that is equal to $(N,s^{\{k-1,k+1\}})$. Any $\Co_\sigma$-extension $(N,s)$ can be expressed as a convex combination of this game and the upper game $(N,\overline{s})$ as $s = \alpha s^k +
    (1-\alpha)\overline{s}$ with 
    \[\alpha = \frac{s(k)-\overline{s}(k)}{s^k(k) - \overline{s}(k)} \in[0,1].\]
    
    For the induction step, suppose that the size of the gap between $i$ and $j$ is $\ell$, $\ell > 1$. Hence there are $\ell$ games 
    \[
    (N,s^{i+1}),(N,s^{i+2}),\dots,(N,s^{j-1}) \text{ together with } (N,\overline{s}).
    \]
    We construct a new system of $\ell-1$ games
    \[
    (N,(s^{i+2})'),(N,(s^{i+3})'),\dots,(N,(s^{j-1})')\text{ together with } (N,(s^{i+1})'),
    \]
    \[
    \text{ where } (s^{m})' \coloneqq \alpha s^{m} + (1-\alpha)\overline{s}\text{ and }\alpha = \frac{s(i+1)-\overline{s}(i+1)}{s^{i+1}(i+1) - \overline{s}(i+1)}.
    \]
    These games correspond to the extreme games of an incomplete game $(N,\mathcal X',\sigma')$ where $\mathcal X' \coloneqq \mathcal{X} \cup \{i+1\}$, and the function $\sigma'$ is defined as $\sigma'(m) \coloneqq \sigma(m)$ for $m \in \mathcal{X}$ and $\sigma'(i+1) \coloneqq s(i+1)$. The game $(N,(s^{i+1})')$ represents the upper game of $\Co_\sigma(v)$.
    Since the new system of $\ell$ games forms the extreme games of $\Co_{\sigma'}$-extensions of $(N,\mathcal{X}',\sigma')$, the game $(N,s)$ (which is also a $\Co_{\sigma'}$-extension of $(N,\mathcal{X}',\sigma')$) is, by induction hypothesis, their convex combination. And as each game $(N,(s^{m})')$ is a convex combination of $(N,\overline{s})$ and $(N,s^{m})$, the game $(N,s)$ is also a convex combination of the former system \[
    (N,s^{i+1}),(N,s^{i+2}),\dots,(N,s^{j-1}) \text{ together with } (N,\overline{s}).
    \]
    
    Notice that if there is more than one gap between the coalition sizes in $\mathcal{X}$, then we can follow a similar construction as in the situation with precisely one gap. This is because any two extreme games parametrised by two coalition sizes from one gap assign the same profit to any coalition size from a different gap. Thus, we can start our construction by filling in the first gap, after that, taking the extreme games of the extended incomplete game and so on, until there is no gap left.
    
    As for the second part of the proof, suppose for a contradiction that $(N,s^k)$ for $k \in \overline{\mathcal{X}}$ is not an extreme game of $\Co_\sigma(v)$. By Definition~\ref{def:extreme-points}, there are $\Co_\sigma$-extensions $(N,s_1),(N,s_2)$ and $m \in N$ such that $s_1(m) < s^k(m) < s_2(m)$. Clearly, $m \notin \mathcal{X}$ (since $s_1(m) = s^k(m) = s_2(m) = \sigma(m)$) and if $m$ is such that $s^k(m) = \overline{s}(k)$ or $s^k(m) = \underline{s}(m)$, we arrive at a contradiction. Therefore, the only case that remains is $m \notin \mathcal{X}$ together with $i < m < j$ and $m \neq k$. For any such $m$, the convexity of the line chart is violated either for $(i,s_1(i)),(k,s_1(k)),(m,s_1(m))$ (if $k < m$), or for $(i,s_2(i)),(m,s_2(m)),(k,s_2(k))$ (if $m < k$). Both cases are depicted in Figure~\ref{fig:sym_conv_extreme_games}.
\qed \end{proof}

\begin{figure}
    \centering
    \includegraphics[width=\textwidth]{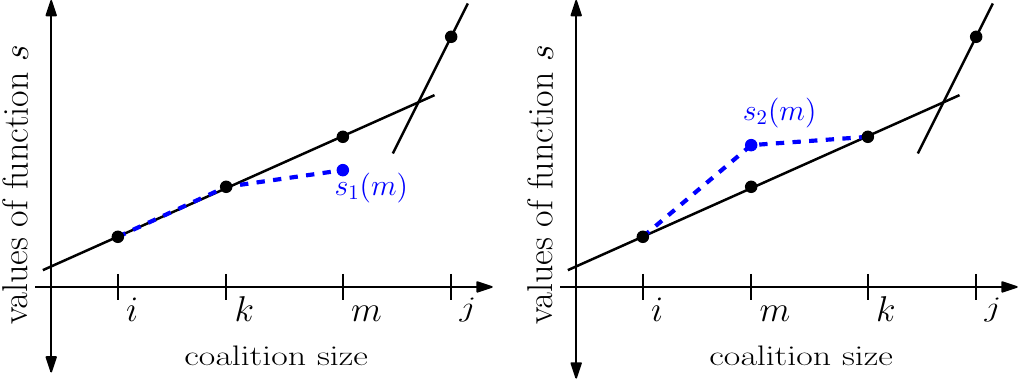}
    \caption{Examples of a violation of convexity of the line chart of both $(N,s_1)$ and $(N,s_2)$. The full lines depict the line chart of $(N,\underline{s})$ and the dotted lines depict the line charts of $(N,s_1)$ and $(N,s_2)$. On the left, the situation where $k < m$ is shown. We have values $s^k(i) = s_1(k)$ and $s^k(k)=s_1(k)$, yet $s_1(m)$ is too small. Similarly, on the right, the situation where $m < k$ is shown, with $s^k(i) = s_2(k)$, $s^k(k) = s_2(k)$. However, in this case, the value $s_2(m)$ is too big.} 
    \label{fig:sym_conv_extreme_games}
\end{figure}

For a $\Co_\sigma$-extendable symmetric incomplete game in a reduced form $(N,\mathcal
X,\sigma)$ with $\Co_\sigma(v)$ bounded and $\lvert \Co_\sigma(v) \rvert > 1$, the number of extreme games is always $\lvert \overline{\mathcal{X}} \rvert
+ 1 = n - \lvert \mathcal{X} \rvert + 2$, no matter what the values of $\sigma$ are.

Algebraically, we can describe the set $\Co_\sigma(v)$ as
\begin{equation}\label{eq:convex-sym-description}
    \Co_\sigma(v) = \Bigg\{\Big(N,\overline{\alpha} \text{ }\overline{s} + \sum_{k \in \overline{\mathcal{X}}} \alpha_k s^k\Big) \bigg\vert\,\, \overline{\alpha} + \sum_{k \in \overline{\mathcal{X}}} \alpha_k = 1, \overline{\alpha}, \alpha_k \geq 0, k \in \overline{\mathcal{X}}\Bigg\},
\end{equation}
namely as the set of convex combinations of extreme games $\overline{s}$ and $s^k$ for $k \in \overline{\mathcal{X}}$.

Geometrically, we can describe the set $\Co_\sigma(v)$ when we restrict the game $(N,\X,\sigma)$ a little. First, suppose $\mathcal{X} = \{0,n\}$ and $\sigma(0) = \sigma(n) = 0$. According to Theorem~\ref{thm:symconv}, we can describe $\Co_\sigma(v)$ by a system of $n-1$ inequalities with $n-1$ unknowns, $Ay \leq 0$, where 
\[
\addtolength{\arraycolsep}{0.5ex}
A=\begin{pmatrix}
    2 & -1 & & 0\\ -1 & \ddots & \ddots &  \\& \ddots  & \ddots & -1\\0 & & -1 & 2
\end{pmatrix}.
\]
The matrix $A$ is an \textit{M-matrix}~\cite{Horn1991}, therefore it is
nonsingular and $A^{-1} \geq 0$. Nonsingularity of $A$ implies that
$\Co_\sigma(v)$ is a pointed polyhedral cone, which is translated such that its
vertex is not necessarily in the origin of the coordinate system. Furthermore,
because $A^{-1} \geq 0$, the \textit{normal cone} $\Co_\sigma(v)^{*}$ of
$\Co_\sigma(v)$ (see~\cite{Boyd2004}) contains the whole nonnegative orthant.
Thus, the vertex of polyhedral cone $\Co_\sigma(v)$ is the biggest element of
$\Co_\sigma(v)$ when restricted to each coordinate (this corresponds with the
statement that the upper game is a $\Co_\sigma$-extension). Therefore,
geometrically, the set $\Co_\sigma(v)$ looks like \textit{squeezed} negative
orthant. For an incomplete game $(N,\mathcal{X'},\sigma')$ where $\{0,n\}
\subseteq \mathcal{X'}$ and $\sigma'(0)=\sigma(n)=0$, the set of $\Co_\sigma$-extensions is $\Co_\sigma(v)$ with some of the coordinates fixed, i.e.
\[
\Co_\sigma(v) \cap_{k\in \mathcal{X'}}\{s(k)=\sigma(k)\}.
\]

\section{Conclusion} \label{sec:conclusion}

We would like to conclude our paper with observations on a connection between
incomplete cooperative games and cooperative interval games. This
connection has not been mentioned in literature so far and we think it
provides a nice bridge between the two approaches to uncertainty in
cooperative game theory.

We remind the reader that a \emph{cooperative interval game} is a pair
$(N,w)$, where $N$ is a finite set of players and $w\colon 2^N \to
\mathbb{IR}$ is the characteristic function of this game with $\mathbb{IR}$
being the set of all real closed intervals. We further set $v(\emptyset) \coloneqq
[0,0]$.

\begin{proposition}
    For a given incomplete game $(N,\mathcal K,v)$ and a set of its extensions
    $\mathcal E$, the associated lower game and upper game induce a cooperative
    interval game $(N,w)$ containing the set $\mathcal E$ of extensions of
    $(N,\mathcal K,v)$. Furthermore, this game is inclusion-wise minimal, i.e.\ 
    for every $S \subseteq N$, there is an extension from $\mathcal E$ attaining
    the lower bound of $w(S)$ and an extension from $\mathcal E$ attaining the
    upper bound of $w(S)$.
\end{proposition}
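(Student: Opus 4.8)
The plan is to unpack the definition of the lower game $(N,\underline{v})$ and upper game $(N,\overline{v})$ of $\mathcal E$ and to recognise that the claimed object is simply the coalition-wise interval $[\underline{v}(S),\overline{v}(S)]$. First I would define the candidate interval game $(N,w)$ by $w(S) := [\underline{v}(S),\overline{v}(S)]$ for every $S \subseteq N$ and check that this is a legitimate cooperative interval game. This requires $\underline{v}(S) \le \overline{v}(S)$, which is immediate from the defining inequality $\underline{v}(S) \le u(S) \le \overline{v}(S)$ applied to any fixed extension $(N,u) \in \mathcal E$ (the set $\mathcal E$ being nonempty, as otherwise neither the lower nor the upper game exists). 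I would also record that $w(\emptyset) = [0,0]$, since every extension is a game and hence vanishes on $\emptyset$, forcing $\underline{v}(\emptyset) = \overline{v}(\emptyset) = 0$.

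Next I would verify the containment claim. By the first defining property of the lower and upper games, every extension $(N,u) \in \mathcal E$ satisfies $\underline{v}(S) \le u(S) \le \overline{v}(S)$ for all $S \subseteq N$, that is, $u(S) \in w(S)$. Thus each extension in $\mathcal E$ lies, coalition by coalition, inside the interval game $(N,w)$, which is exactly the sense in which $(N,w)$ contains $\mathcal E$.

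For minimality I would invoke the second defining property of the lower and upper games: for each $S$ there exist extensions $(N,w_1),(N,w_2) \in \mathcal E$ with $w_1(S) = \underline{v}(S)$ and $w_2(S) = \overline{v}(S)$. These witness that both endpoints of the interval $w(S)$ are attained by genuine extensions, which is precisely the tightness asserted in the statement. To make the word \emph{minimal} explicit I would argue by contradiction: any interval game $(N,w')$ containing $\mathcal E$ with $w'(S) \subsetneq w(S)$ for some $S$ must shrink at least one endpoint, say its lower endpoint strictly exceeds $\underline{v}(S)$; then the extension $(N,w_1)$ with $w_1(S) = \underline{v}(S)$ would fail to satisfy $w_1(S) \in w'(S)$, so $(N,w')$ could not contain $\mathcal E$. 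The symmetric argument handles the upper endpoint via $(N,w_2)$.

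I expect no serious obstacle here, since the statement is essentially a reformulation of the definition of the lower and upper games in the language of interval games. The only care needed is the well-definedness check ($\underline{v}(S) \le \overline{v}(S)$ and the value on $\emptyset$) and spelling out that attainability of both endpoints is exactly what inclusion-wise minimality means.
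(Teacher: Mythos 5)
Your proposal is correct. The paper in fact gives no proof of this proposition at all, treating it as an immediate consequence of the definition of the lower and upper games; your argument is exactly the intended unpacking of that definition (well-definedness of the intervals, containment from the first defining property, tightness/minimality from the attainment property), so there is nothing to add.
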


We can also take another view-point. We can generalise the definition of
incomplete game to the interval setting by allowing the partial game to be an
interval game. We can then ask what are the extensions having some desired
property, for example being selection superadditive interval games. Indeed,
this aligns with the main motivation behind some of the results in~\cite{Bok2015}
and~\cite{1811.04063}. 

We think that this is just the first step towards unifying both theories
together. The fact that so far, no connection has been made between these two
areas in literature seems surprising to us. Some of the issues raised here are
work in progress.

\addcontentsline{toc}{section}{References}
\bibliographystyle{elsarticle-num}
\bibliography{bibliography}

\appendix
\section{Omitted proofs}

\begin{proof}{\textit{(of Proposition \ref{prop:symconvchar})}}
    If the game is symmetric convex, we consider the characterisation from Theorem~\ref{thm:convchar} for coalitions $S,S\cup j$ and $i \in S$, obtaining
    \begin{equation}\label{eq:convex-symmetric2}
        v(S) - v(S\setminus i) \leq v(S\cup j) - v(S\cup j\setminus i).
    \end{equation}
    Because $\lvert (S\cup j)\setminus i \vert = \lvert S \rvert$, we have $v((S\cup j)\setminus i) = v(S)$ by symmetry. By adding $v(S)$ to \eqref{eq:convex-symmetric2} and rearranging the inequality, we get \eqref{eq:convex-symmetric}
    \[v(S) \leq \frac{v(S\setminus i) + v(S\cup j)}{2}.\]
    
    For the opposite implication, suppose that conditions \eqref{eq:convex-symmetric} hold and $(N,v)$ is not convex. Then there is a player $k \in N$ and coalitions $T_1 \subsetneq T_2 \subseteq N \setminus k$ for which the condition from Theorem~\ref{thm:convchar} is violated, i.e. 
    \begin{equation}\label{eq:convchar}
        v(T_1 \cup k) - v(T_1) > v(T_2 \cup k) - v(T_2).    
    \end{equation}
    We choose player $k$ and coalitions $T_1,T_2$ such that the difference $\lvert T_2\rvert - \lvert T_1\rvert$ is minimal. We distinguish two possible cases.
    \begin{enumerate}
        \item If $\lvert T_2\rvert - \lvert T_1\rvert = 1$, then by symmetry of $v$, we have that $v(T_2) = v(T_1 \cup k)$. In that case, we get
        $$v(T_2) > \frac{v(T_1) + v(T_2 \cup  k)}{2}.$$
        Furthermore, there exists a unique $\ell \in T_2 \setminus T_1$ such that $T_1 \cup  \ell = T_2$. Thus we can write
        $$v(T_2) > \frac{v(T_2 \setminus  \ell) + v(T_2 \cup k)}{2},$$
        which leads to a contradiction with \eqref{eq:convex-symmetric}.
        \item If $\lvert T_2\rvert - \lvert T_1\rvert > 1$, then there is a coalition $T_3$ such that $T_1 \subsetneq T_3 \subsetneq T_2 \subseteq N \setminus k$. By minimality of $\lvert T_2 \rvert - \lvert T_1 \rvert$, we know that
        \begin{equation}\label{eq:convex-symmetric3}
        v(T_1 \cup k) - v(T_1) \leq v(T_3 \cup k) - v(T_3)
        \end{equation}
        and
        \begin{equation}\label{eq:convex-symmetric4}
            v(T_3 \cup  k) - v(T_3) \leq v(T_2 \cup  k) - v(T_2).
        \end{equation}
        By adding \eqref{eq:convex-symmetric3} and \eqref{eq:convex-symmetric4} together, we get
        $$v(T_1 \cup  k) - v(T_1) \leq v(T_2 \cup  k) - v(T_2),$$
        which is a contradiction with \eqref{eq:convchar}. \qed 
    \end{enumerate}
\end{proof}

\end{document}